\documentclass[sigconf,screen]{acmart}

\usepackage{balance}
\usepackage{enumitem}
\usepackage{listings}

\AtBeginDocument{%
  }

\acmYear{2025}\copyrightyear{2025}
\setcopyright{rightsretained}
\acmConference[SPAA '25]{37th ACM Symposium on Parallelism in Algorithms and Architectures}{July 28--August 1, 2025}{Portland, OR, USA}
\acmBooktitle{37th ACM Symposium on Parallelism in Algorithms and Architectures (SPAA '25), July 28--August 1, 2025, Portland, OR, USA}
\acmDOI{10.1145/3694906.3743313}
\acmISBN{979-8-4007-1258-6/25/07}


\usepackage{amsmath}
\usepackage{algorithm,algpseudocode}
\usepackage{enumitem}
\usepackage{mathtools}
\usepackage{subcaption}
\usepackage{balance}
\usepackage{listings}

\lstset{basicstyle=\footnotesize\ttfamily, keywordstyle=\bfseries, mathescape=true, escapeinside={@}{@}, morekeywords={parfor,for,if,else,return,while}}

\newcommand{\E}{\mathbb{E}}

\newcommand{\ccode}[1]{\textsc{#1}}
\newcommand{\depth}{depth}

\mathtoolsset{showonlyrefs=true}

\title{Parallel Batch-Dynamic Maximal Matching\\ with Constant Work per Update}

  \author{Guy E. Blelloch}
  \affiliation{\institution{Carnegie Mellon University}\country{Pittsburgh, USA}}
  \email{guyb@cs.cmu.edu}

  \author{Andrew C. Brady}
  \affiliation{\institution{Carnegie Mellon University}\country{Pittsburgh, USA}}
  \email{acbrady2020@gmail.com}

\newcommand{\myparagraph}[1]{\medskip{\noindent\bfseries\itshape{#1}.~}}

\ccsdesc[500]{Theory of computation~Dynamic graph algorithms}

\begin{CCSXML}
<ccs2012>
   <concept>
       <concept_id>10003752.10003809.10010170.10010171</concept_id>
       <concept_desc>Theory of computation~Shared memory algorithms</concept_desc>
       <concept_significance>500</concept_significance>
       </concept>
   <concept>
       <concept_id>10003752.10003809.10003635.10010038</concept_id>
       <concept_desc>Theory of computation~Dynamic graph algorithms</concept_desc>
       <concept_significance>500</concept_significance>
       </concept>
 </ccs2012>
\end{CCSXML}

\keywords{parallel graph algorithms, batch-dynamic algorithms}

\begin{document}

\begin{abstract}
  We present a work optimal algorithm for parallel fully batch-dynamic
  maximal matching against an oblivious adversary. It processes batches of updates
  (either insertions or deletions of edges) in constant expected
  amortized work per edge update, and in $O(\log^3 m)$ depth per batch
  whp, where $m$ is the maximum number of edges in the graph over
  time.  This greatly improves on the recent result by Ghaffari and
  Trygub (2024) that requires $O(\log^9 m)$ amortized work per
  update and $O(\log^4 m )$ depth per batch, both whp. 

  The algorithm can also be used for parallel batch-dynamic hyperedge maximal matching.  For
  hypergraphs with rank $r$ (maximum cardinality of any edge) the
  algorithm supports batches of updates with $O(r^3)$
  expected amortized work per edge update, and $O(\log^3 m)$ depth per
  batch whp. Ghaffari and Trygub's
  parallel batch-dynamic algorithm on hypergraphs requires $O(r^8 \log^9 m)$ amortized work
  per edge update whp.   We leverage ideas from the
  prior algorithms but introduce substantial new ideas.  Furthermore,
  our algorithm is relatively simple, perhaps even simpler than Assadi and Solomon's (2021)
  sequential dynamic hyperedge algorithm.

  We also present the first work-efficient algorithm for
  parallel static maximal matching on hypergraphs.  For a hypergraph with total
  cardinality $m'$ (i.e., sum over the cardinality of each edge), the
  algorithm runs in $O(m')$ work in expectation and
  $O(\log^2 m)$ depth whp. The algorithm also has some properties
  that allow us to use it as a subroutine in the dynamic algorithm to select
  random edges in the graph to add to the matching.  
  
  With a standard reduction from set cover to hyperedge maximal matching, 
  we give state of the art $r$-approximate static and batch-dynamic parallel set cover algorithms, 
  where $r$ is the maximum frequency of any element, and batch-dynamic updates 
  consist of adding or removing batches of elements.
\end{abstract}

\maketitle

\section{Introduction}

We consider the problem of maintaining the maximal matching of a graph
under batch insertions and deletions of edges (fully dynamic), or
hyperedges in a hypergraph.  There have been several important results
on sequential fully dynamic maximal matching in the past fifteen years.  Onak and
Rubinfeld described an dynamic algorithm with $O(\log^2 n)$ expected
amortized time per update that maintains a constant-approximate maximum
matching~\cite{OR10}.  Baswana, Gupta and Sen (BGS) presented the
first maximal matching algorithm with polylogarithmic time per
update~\cite{BGS11}.  Their algorithm supports single updates
in $O(\log n)$ amortized time with high
probability (whp).\footnote{We say that $f(n)=O(g(n))$ with high probability in $n$ if there exists constants $n_0,c$ such that for all constants $k > 1$, we have for $n \ge n_0$, that $Pr[f(n) \le c k g(n)] \ge 1-n^{-k}$. }  Solomon~\cite{Sol16} then improved the result to
$O(1)$ time per update, again amortized and whp.  These results are
for standard graphs.  Assadi and Solomon (AS) then extended the results to
hypergraphs~\cite{AS21}.  First, they present an algorithm
that, for a hypergraph where each edge has rank at most $r$, supports
insertions and deletions in $O(r^3)$ per update, amortized and whp. 
Then, they improve this algorithm to yield an $O(r^2)$ per update algorithm, also amortized and whp.\footnote{We believe there to be significant gaps in the $O(r^2)$ algorithm, and are unsure of its correctness. We are in contact with the authors.}
All of these results are against an oblivious adversary who knows the 
algorithm but must select the sequence of
updates without knowing any of the random choices by the algorithm.
All of these algorithms maintain a leveling structure that try to keep
vertices or edges on a level appropriate for their degree.

Algorithms that are robust against an adaptive adversary, a stronger adversary who can respond to random choices by the algorithm, are either much less efficient, give a weaker approximation ratio, or only report the approximate size of the matching. 
Liu et al.'s batch-dynamic algorithm has update time $O(\alpha + \log^2 n)$ per element in the batch, where $\alpha$ is the arboricity of the graph ~\cite{Liu2022parallel}. 
Roghani, Saberi, and Wajc (2022) get a better than 2 approximation ratio, but have $O(n^{3/4})$ worst-case update time \cite{RSW22}.\footnote{The approximation ratio is compared to the maximum matching. Maximal matching gives a 2-approximation.}  
Wajc (2020) achieves constant expected time but for $(2+\epsilon)$-approximation \cite{Wajc20}. Bhattacharya et al. (2024) find the size of a matching with approximation ratio better than 2 in polylog update time whp \cite{BKSW24}.
There is no known sequential dynamic maximal matching algorithm that achieves polylog expected updated time against an adaptive adversary. Thus, for the remainder of this work we focus on algorithms that function against an oblivious adversary. 

Ghaffari and Trygub (GT) considered the problem in the parallel batch-dynamic setting~\cite{GT24}.  In this setting, instead of inserting or deleting one edge at a time, the user inserts or deletes, on each step, a batch of edges.  The size of the batch can vary from step to step.  The advantage of considering batches is that an algorithm can then, possibly, apply the updates in parallel, across multiple processors in much better time than applying each one after the other.  Parallel batch dynamic updates have been considered for many graph problems~\cite{Liu2022parallel,acar2019batchconnect,tseng2018batch,acar2020changeprop,dhulipala2021parallel,AndersonBT20,dhulipala2019parallel,AndersonB21}.  In these results, the efficiency of such algorithms is measured in terms of the total work required to process the batch (number of primitive steps), and the \depth{} (also called span or critical path) of dependences in the computation for each batch.  One goal in the design of such parallel batch-dynamic algorithms is to achieve asymptotic work that is as good as the best sequential algorithm, or at least close to as good.  Algorithms that match the sequential work are referred to as \emph{work efficient}.  A second goal is to minimize the \depth{}.  Ideally the \depth{} should be polylogarithmic in the size of the problem.

GT build on the algorithm of BGS to develop a parallel batch-dynamic data structure with polylogarithmic \depth{} per batch update, whp.  However, the work is far from work efficient, requiring $O(r^8 \log^9m)$ work per edge update, whp.  On a regular graph this reduces to $O(\log^9m)$ work per edge update whp.  They ask at the end of their paper, with regards to regular graphs: ``Can we get a parallel dynamic algorithm for maximal matching with $O(1)$ amortized work per update and poly($\log n$) \depth{} for processing any batch of updates?''

In this paper we present a batch-dynamic algorithm that supports updates (insertions or deletions) in $O(1)$ work per edge update for regular graphs and $O(r^3)$ work for hypergraphs, both amortized and in expectation.  The algorithm runs in $O(\log^3 m)$ \depth{} whp.  We therefore answer Ghaffari and Trygub's question in the affirmative (when $r = 2$).
In particular we have:

\begin{theorem}
  There exists a randomized algorithm that solves
  the parallel batch-dynamic maximal matching problem on hypergraphs with $O(r^3)$ amortized
  and expected work per edge update against an oblivious adversary.  Furthermore, processing each batch requires $O(\log^3 m)$
  \depth{} with high probability.  Here $r$ is the rank of the hypergraph and $m$ is the maximum number
  of edges to appear at any point in the update sequence.  The batch size can vary across steps.  \end{theorem}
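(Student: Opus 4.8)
The plan is to maintain a hierarchical leveling structure on the matched hyperedges, in the tradition of Baswana--Gupta--Sen and Assadi--Solomon, but organized so that an entire batch can be resolved in one sweep over the levels, with each level handled by a single call to a parallel static subroutine. Concretely, I would assign every matched edge a level in $\{0,1,\dots,L\}$ with $L = O(\log m)$, and maintain the invariant that an edge placed at level $\ell$ was chosen uniformly at random from a pool of $\Omega(2^\ell)$ candidate edges, together with an ownership map charging each unmatched edge to an incident matched edge whose level dominates it. Maximality is then the statement that no edge has all $r$ of its endpoints free, which the invariant enforces by construction once a batch has been fully processed.

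Next I would specify batch processing. A deletion that removes a matched edge frees its endpoints and can create \emph{unsettled} edges (all of whose endpoints are now free), and insertions can do likewise; every unsettled edge must be resolved by the end of the batch. I would resolve them in $O(\log m)$ rounds, sweeping the levels from highest to lowest. In the round for level $\ell$ I gather the pool of edges eligible to be matched there and invoke the work-efficient static parallel maximal matching subroutine promised earlier, which runs in $O(\log^2 m)$ \depth{} whp, to select in parallel a conflict-free set of new matched edges, \emph{each sampled uniformly from its local pool}. This is exactly the extra property the static algorithm is designed to provide: it simultaneously produces a maximal (hence conflict-free and progress-making) selection and the uniform random choices that the amortized analysis needs. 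With $O(\log m)$ rounds at $O(\log^2 m)$ \depth{} apiece, the total \depth{} per batch is $O(\log^3 m)$ whp, as claimed.

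The heart of the argument, and the step I expect to be the main obstacle, is the amortized expected work bound against the oblivious adversary. The intended accounting is a charging/potential argument in which a matched edge at level $\ell$ carries potential $\Theta(2^\ell)$, deposited when it is created by examining its $\Omega(2^\ell)$-size pool and released to pay for the rematching triggered when it is destroyed. The randomization is what makes this affordable: because a level-$\ell$ edge is drawn uniformly from $\Omega(2^\ell)$ equally likely candidates whose future deletions the oblivious adversary must fix in advance, the adversary can only destroy that particular matched edge after $\Omega(2^\ell)$ deletions among its candidate set in expectation, so the $\Theta(2^\ell)$ cost of each triggered rematch is amortized against $\Omega(2^\ell)$ updates. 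Porting this to the parallel batch setting is delicate: the per-round random choices must be independent enough that the single-choice bound survives being applied to many simultaneous rematchings, and the charging must remain valid when a round settles many edges at once rather than one after another. I would handle this by relating the batch process to a sequential schedule of the same updates, so that the oblivious-adversary argument can be invoked per destroyed edge and then summed.

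Finally, the $r^3$ factor for hypergraphs enters this same accounting. Each edge touches up to $r$ vertices, so freeing one endpoint can unsettle that vertex's entire incident set; the ownership charge from a destroyed edge to its owned edges passes through up to $r$ shared vertices; and the union bounds controlling how often re-examination is triggered lose a further factor of $r$. These three sources multiply to $O(r^3)$ expected amortized work per update, degenerating to $O(1)$ when $r = 2$. I would close by checking that all auxiliary bookkeeping---pool construction, ownership reassignment, and level updates---fits within the same work and \depth{} budgets using standard parallel primitives such as prefix sums, semisorting, and filtering, and that the invariants are fully restored at the end of each batch so that a correct maximal matching is reported between batches.
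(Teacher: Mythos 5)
There are two genuine gaps here, and they are precisely the two places where the paper has to work hardest. First, your uniformity invariant---that each level-$\ell$ match is ``chosen uniformly at random from a pool of $\Omega(2^\ell)$ candidates,'' delivered as an ``extra property'' of the static subroutine---is not something the parallel random greedy matching actually provides, and the paper devotes Section 3.1 to explaining why. In random greedy matching the matched edge and its sample space are generated \emph{together} by the same random permutation, so conditioned on the sample space the matched edge is not uniform over it (the paper's example: on the path $(1,2),(2,3),(3,4)$, a sample space of all three edges forces the match to be $(2,3)$). Recovering the amortization without uniformity is the paper's key analytic step: it partitions the permutations into equivalence classes fixing the first $r-1$ positions and the match time, argues that within each class the matched edge is uniform over $P \cap U$ (remaining pool intersected with undeleted edges), and concludes $\E[\Phi(d_t)] \le 2$ per early delete (Lemma \ref{Lemma:3.4}). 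Alternatively, actually engineering a parallel subroutine that \emph{does} sample each match uniformly from a deterministic pool, with many conflicting pools settled simultaneously and work efficiency preserved, is essentially what Ghaffari--Trygub attempt with their level-by-level iterated sampling, and it costs them polylogarithmic work per update; so your plan as stated either rests on a false premise or reintroduces the very overhead the theorem is supposed to eliminate. (Relatedly, your highest-to-lowest level sweep is GT's architecture, not the paper's: the paper settles all levels together in doubling rounds of \ccode{randomSettle}, which is part of how it avoids GT's work blowup.)

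Second, your potential argument only accounts for \emph{natural} deletions (the adversary destroying a match after $\Omega(2^\ell)$ pool deletions), but the algorithm itself destroys matches: a newly settled edge can be incident on an existing match (a \emph{stolen} delete, up to $r(r-1)$ per old match), and a match can become \emph{bloated} by acquiring too many cross edges after levels change. These induced deletions can occur immediately after a match is created, before the adversary has touched its pool, so ``the adversary can only destroy it after $\Omega(2^\ell)$ deletions in expectation'' gives you nothing to amortize against; your gesture at ``relating the batch process to a sequential schedule'' does not address this at all. The paper's fix is an aggregate counting argument with no per-edge level discipline: each round of random settling creates total sample space at least twice the sample space it destroys (Lemma \ref{lemma:adddelete}, which is exactly where the heavy threshold $4r^2 2^{l(e)}$ and the gap-of-$2$ leveling earn their keep), whence $S_n > \frac{1}{3}S_i$ over an empty-to-empty run and induced work is charged to natural epochs (Lemma \ref{Lemma:5.7}). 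Your $r^3$ accounting is also off as stated: in the paper the factor is $r^2$ (stolen deletes per removed match, forcing the heavy threshold) times $r$ (cardinality cost per edge operation), not three independent union-bound losses. Without the non-uniform sampling analysis and the induced-deletion charging scheme, the proposal does not yield the claimed $O(r^3)$ amortized expected work.
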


For $r=2$ (standard graphs), this is constant time per update, which matches the best known sequential algorithm \cite{Sol16}. Thus we can conclude the following corollary. 

\begin{corollary} There exists a parallel randomized batch dynamic graph maximal matching algorithm with constant update time in expectation and polylog \depth{} whp against an oblivious adversary, which is work-optimal. \end{corollary}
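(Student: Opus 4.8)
The plan is to derive the corollary as the rank-$2$ specialization of the Theorem, combined with a trivial matching lower bound. A standard (non-hyper) graph is precisely a hypergraph of rank $r = 2$, since every ordinary edge has cardinality two, so I would simply instantiate the Theorem at $r = 2$. The amortized and expected work per edge update then becomes $O(r^3) = O(2^3) = O(1)$, i.e.\ constant, and the \depth{} per batch is $O(\log^3 m)$ whp, which is polylogarithmic in $m$ (and hence in $n$, since $m = O(n^2)$). This already delivers everything claimed in the corollary except the word ``work-optimal.''

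For work-optimality I would argue the matching lower bound. Any correct batch-dynamic maximal matching algorithm must at least read each update in a batch: an update it ignores could change whether the current matching is still maximal (for instance, deleting a matched edge, or inserting an edge both of whose endpoints are currently unmatched). Hence processing a batch of $b$ updates requires $\Omega(b)$ work, so every algorithm spends $\Omega(1)$ amortized work per update. Since our algorithm achieves $O(1)$ amortized expected work per update, this is optimal up to constant factors; together with the polylogarithmic \depth{}, the algorithm is work-optimal in the sense claimed.

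I expect essentially no obstacle, since the corollary is a direct consequence of the Theorem: the only content beyond substituting $r = 2$ is the observation that constant work per update cannot be beaten. It is worth noting that this constant bound matches the best known sequential result of Solomon, namely $O(1)$ amortized time per update, so the parallel batch-dynamic algorithm attains sequential work efficiency while simultaneously guaranteeing polylogarithmic \depth{}.
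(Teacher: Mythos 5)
Your proposal is correct and matches the paper's own reasoning: the paper likewise obtains the corollary by instantiating the theorem at $r=2$ to get $O(1)$ expected amortized work per update with $O(\log^3 m)$ depth whp, and justifies work-optimality by the trivial $\Omega(1)$-per-update (equivalently $\Theta(m)$ total) lower bound, since every update must at least be examined. No gaps; your added remark that this matches Solomon's sequential $O(1)$ bound is exactly the comparison the paper makes as well.
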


Because the batch dynamic and parallel models are more general than the dynamic and sequential models, our algorithm subsumes the previous algorithms in those spaces. Because maximal matching is lower bounded by $\Theta(m)$, our algorithm is optimal for sequential static maximal matching, parallel static maximal matching, and sequential dynamic maximal matching. 

Cardinality matching has various applications \cite{Burq19,Wajc20}. Generally, the vertices of the graph model agents or resources, and edges connect compatible pairs. In a bipartite graph, one focuses on matching resources to consumers, in a general graph, one is pairing agents together, and in a hypergraph, one is selecting compatible groups of agents. Maximal matching accepts a smaller matching size in exchange for time savings. Batch-dynamic updates against an oblivious adversary allow the compatibility preferences of agents to change over time, but only due to outside effects (in ways agnostic to the matching itself). In summary, our setting, parallel batch-dynamic hypergraph maximal matching against an oblivious adversary, addresses the problem of selecting groups of agents quickly, with compatibility preferences that change over time only due to outside effects, where speed is more important than optimizing the number of groups matched. Theoretically, the attention \cite{OR10,BGS11,Sol16,AS21} the sequential dynamic algorithms community has given to maximal matching against an oblivious adversary renders it interesting in the parallel-batch dynamic setting. In turn, matching received attention from the dynamic algorithms community perhaps because of static matching's relevance in TCS as a whole; for example, Edmonds's Blossom algorithm (for finding a maximum matching on a general graph) contributed to the use of polynomial time as a measure of efficiency \cite{Edmonds65,Mehta13}.

Within our algorithm, we use static hypergraph maximal matching as a subprocedure, and our work bound is proportional to the cost of static matching.  However, directly translating the existing static parallel matching algorithms~\cite{Luby86,BirnOSSS13,BFS12} to hypergraphs gives $O(mr^2)$ work, or $O(mr\log m)$ work~\cite{GT24}.  Thus, to improve our work bounds, we develop the first work-efficient, and work optimal, parallel algorithm for maximal hypergraph matching.  This builds on the work of Blelloch, Fineman and Shun (BFS)~\cite{BFS12} and improved bounds by Fischer and Noever (FN) ~\cite{FN20}. These works show that the dependence \depth{} of the simple greedy sequential algorithm for maximal matching is shallow when applied to a random permutation.  This also applies to hypergraph matching.  We extend the approach of BFS to develop a work-efficient algorithm.  The \emph{cardinality} of an edge in a hypergraph is the number of vertices it is incident on, and the \emph{total cardinality} of a hypergraph is the sum of the cardinalities of the edges.

\begin{lemma}
  There exists a parallel algorithm for finding a maximal matching of
  a hypergraph with $m$ edges and total cardinality $m'$ with $O(m')$ expected work and
  $O(\log^2 m)$ \depth{} whp.
\end{lemma}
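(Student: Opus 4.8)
The plan is to compute the \emph{lexicographically-first} maximal matching with respect to a uniformly random permutation of the edges, and to implement it in parallel with low \depth{} and linear work. First I would give each edge an independent uniform random priority (equivalently, fix a random permutation of the $m$ edges) and consider the sequential greedy that scans edges in increasing priority order, adding an edge to the matching whenever none of its (up to $r$) vertices is already matched. This obviously produces a maximal matching, so the entire task is to reproduce \emph{this} output in parallel. I would work with the implicit conflict (line) hypergraph, whose nodes are the $m$ edges and in which two nodes conflict when the corresponding edges share a vertex; the sequential greedy is then exactly greedy MIS on this structure under the random order.

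For the parallel implementation I would use the standard round-based view. Call an undecided edge \emph{ready} if every higher-priority edge that shares one of its vertices has already been decided. The key observation is that ready edges are pairwise non-conflicting: if two undecided edges share a vertex, the lower-priority one still has a higher-priority undecided conflicting edge and so is not ready. Hence in each round the ready edges themselves form a matching and can all be added to the output simultaneously; we then delete them together with every edge sharing a vertex with a newly matched edge, and recurse. An edge is ready exactly when, on each of its vertices, it is the minimum-priority surviving incident edge, so a round can be driven by per-vertex minima over incident undecided edges.

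For the \depth{} I would invoke the dependence-length analysis of BFS and FN. The number of rounds equals the longest chain of edges of decreasing priority in which consecutive edges conflict; since this is precisely the round complexity of randomized greedy MIS on the (line) hypergraph on $m$ nodes, the FN bound gives $O(\log m)$ rounds whp, and the argument only uses that priorities form a random permutation, so it transfers verbatim to the hypergraph setting. Each round computes vertex minima, extracts the ready edges, and marks deletions using standard parallel primitives (reductions over incidence lists, compaction), at $O(\log m)$ \depth{} per round, for $O(\log^2 m)$ \depth{} overall. The deletion/invalidation work is easy to keep linear: each invalidated edge $f$ is discovered through a specific shared vertex $v$, so I charge that discovery to the incidence $(f,v)$, and since $f$ disappears once removed, each of the $m'$ incidences is charged $O(1)$ times, for $O(m')$ invalidation work in total.

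The step I expect to be the main obstacle is the rest of the work bound, because the naive round-based algorithm recomputes the per-vertex minima over all surviving incidences in every round and therefore spends $\Omega(m' \cdot \text{rounds})$ work rather than $O(m')$. To eliminate this I would adapt the prefix-doubling technique of BFS: process the random permutation in geometrically growing prefixes, bringing an edge's incidences into play only when that edge enters the current prefix, and charge all readiness-testing work in a phase to the incidences of edges decided in that phase. The crux is a random-prefix argument, tailored to hypergraphs, showing that each incidence participates in only $O(1)$ readiness computations in expectation before its edge is decided; combined with the $O(m')$ invalidation bound this yields $O(m')$ expected work. The hypergraph-specific difficulty is ensuring the bound stays $O(m')$ (the sum of cardinalities) rather than degrading by a factor of $r$, which is exactly what the incidence-level charging, together with the randomness of the order restricted to each prefix, is designed to guarantee.
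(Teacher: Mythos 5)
Your high-level framework coincides with the paper's: random priorities, the lexicographically-first matching viewed as greedy MIS on the conflict structure among edges, rounds that select the ``ready''/root edges (which you correctly observe form a matching), the Fischer--Noever $O(\log m)$ round bound whp combined with $O(\log m)$ depth per round, and the $O(m')$ incidence-charging for deletions are all exactly as in the paper. The gap is in the one step you yourself flag as the crux: the $O(m')$ bound on readiness maintenance. You propose to adapt the BFS prefix-doubling scheme and rest the entire work bound on an unproven conjecture --- that each incidence participates in only $O(1)$ readiness computations in expectation. But this is precisely where the paper says the known approach breaks: the authors state that directly translating the existing static parallel algorithms (including BFS) to hypergraphs gives $O(mr^2)$ work, or $O(mr\log m)$ for Ghaffari--Trygub's variant, and the difficulty is concrete --- within one prefix phase, an undecided edge's incidences can be rescanned in every one of the up to $\Theta(\log m)$ rounds of that phase, and nothing in your sketch controls the expected number of rescans per incidence. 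Proving your ``random-prefix argument, tailored to hypergraphs'' would itself be the entire content of the lemma, so as written the proposal does not establish the work bound. (A secondary loose end: your $O(\log^2 m)$ depth argument is given for the no-prefix round-based algorithm, but your final algorithm is the prefix-doubling one, whose total round count across phases you never re-establish.)

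The paper closes this gap with a data-structural device rather than a probabilistic charging argument, and it uses no prefixes at all. For each vertex $v$ it stores the incident edges in an array \textsc{edges}$(v)$ sorted by priority, together with a pointer \textsc{top}$(v)$ to the highest-priority surviving incident edge; edges are never removed from the array --- the pointer only advances rightward, via a \textsc{findNext} primitive (doubling followed by binary search) that costs $O(d)$ work for an advance of distance $d$ and $O(\log m)$ depth. Since each pointer moves monotonically along its array and the arrays have total length $m'$, all per-vertex minimum maintenance costs $O(m')$ \emph{deterministically} (amortized), with no appeal to the randomness of the order. A per-edge atomic counter of how many of its vertices currently have it as their top makes the root test $O(1)$: when the counter reaches $|V(e)|$, the edge enters the next frontier. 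Randomness then enters the work bound only through bucket-sorting the random priorities ($O(m)$ expected) and hash-based \emph{groupBy} operations, and the depth is cleanly the FN round bound times $O(\log m)$ per round. If you want to salvage your route, you would need to actually prove your crux lemma for the line hypergraph; the paper's pointer-plus-counter mechanism is the way it sidesteps needing any such statement.
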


Interestingly, this static maximal matching algorithm has additional important
properties we make use of.  In particular, because of the random
ordering, each edge is selected to be a match at random among its
remaining incident edges.  For each match, call the remaining
incident edges at the time it is matched, including itself, the sample
space for that match.  We call the size of this sample space the price of the match. 
We show that for any adversarial
(oblivious) sequence of deletions we can pay this price by charging
every edge deleted at or before its incident matched edge at most two units of price in expectation.   The proof is subtle
since the sample spaces themselves depend on the random choices.

We note that $r$-approximate set-cover for instances where each element
belongs to at most $r$ sets reduces to maximal matching in a
hypergraph with rank $r$~\cite{AS21}.  In the reduction, sets
correspond to vertices in the hypergraph and elements to the hyperedges.
Our static and batch dynamic maximal matching algorithms therefore imply the
following corollaries. 

\begin{corollary}
There exists a randomized parallel batch-dynamic
$r$-approximate set cover algorithm that supports batches of
insertions and deletions of elements against an oblivious adversary, each element covered by at most
$r$ sets.  The algorithm does $O(r^3)$ work per element update
(expected and amortized) and has $O(\log^3 m)$ \depth{} per batch whp, where
$m$ is the maximum number of elements to ever appear.  
\end{corollary}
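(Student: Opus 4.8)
The plan is to prove the corollary by the standard reduction from $r$-approximate set cover to hypergraph maximal matching (due to Assadi and Solomon, \cite{AS21}) and then to invoke the main theorem to maintain the matching, treating the cover itself as a lightweight side computation. First I would set up the reduction: given a set cover instance in which every element lies in at most $r$ sets, build a hypergraph $H$ whose vertices are the sets and whose hyperedges are the elements, where the hyperedge representing an element $e$ is incident on exactly the (at most $r$) sets containing $e$. By construction $H$ has rank $r$, and inserting or deleting an element in the set cover instance corresponds to inserting or deleting the single hyperedge representing that element. Hence a batch of $k$ element updates maps to a batch of $k$ hyperedge updates, the two instances remain in one-to-one correspondence over time, and $m$ is the maximum number of elements, equivalently hyperedges, ever present.

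Next I would establish that a maximal matching $M$ of $H$ yields an $r$-approximate cover. Define the cover $C$ to be the set of all vertices (sets) incident to some matched hyperedge in $M$. Coverage follows from maximality: every element $e$ shares a vertex with some matched hyperedge, for otherwise $e$ could be added to $M$, contradicting maximality; that shared vertex is a set in $C$ containing $e$, so $C$ covers every element. For the approximation ratio, observe that the matched hyperedges are pairwise disjoint, so no single set can cover two distinct matched elements; hence any valid cover, in particular an optimal one, must use at least $|M|$ sets, giving $\mathrm{OPT} \ge |M|$. Since each of the $|M|$ matched hyperedges contributes at most $r$ vertices to $C$, we obtain $|C| \le r\,|M| \le r\cdot\mathrm{OPT}$, as required.

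Then I would address the dynamic maintenance. Applying the main theorem to the batch-dynamic hyperedge matching problem on $H$ gives a data structure supporting each batch in $O(r^3)$ expected amortized work per update and $O(\log^3 m)$ \depth{} whp, which, as it applies each batch, reports the set of hyperedges that enter or leave $M$. To maintain $C$ I would keep, for each vertex, a count of the matched hyperedges incident on it, with a vertex belonging to $C$ exactly when its count is positive. When a hyperedge enters $M$ I increment the counts of its at most $r$ vertices, and when it leaves I decrement them, at $O(r)$ work per matching change. Within a batch these contributions can be aggregated in parallel by collecting all $(\text{vertex}, \pm 1)$ pairs and applying a semisort followed by a segmented reduction, which runs in $O(\log m)$ \depth{} and is therefore dominated by the $O(\log^3 m)$ \depth{} of the matching update.

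The part I expect to require the most care is the accounting that folds the cover maintenance into the matching bound without double counting: one must argue that the number of per-batch matching changes, and hence the associated $O(r)$-work cover updates, is already accounted for by the work the matching algorithm is charged. This holds because each matching change is performed by the matching algorithm itself and already entails inspecting the at most $r$ incident vertices of the changed hyperedge, so the cover bookkeeping adds no asymptotic overhead and stays within the $O(r^3)$ per-update bound. Everything else is a direct restatement of the reduction combined with the main theorem.
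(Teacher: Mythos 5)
Your proposal is correct and follows essentially the same route as the paper, which proves the corollary by citing exactly this reduction from~\cite{AS21} (sets as vertices, elements as rank-$r$ hyperedges) and invoking the main batch-dynamic matching theorem. The paper leaves the details implicit, whereas you spell out the approximation argument (vertex-disjointness of $M$ gives $|M| \le \mathrm{OPT}$ and $|C| \le r|M|$) and the counter-based cover maintenance, both of which are sound and stay within the stated work and \depth{} bounds.
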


\begin{corollary}
There exists a randomized algorithm for static parallel $r$-approximate set cover, with $O(m')$ expected work and $O(\log^2 m)$ depth whp, where $m'$ is the total cardinality of the $m$ edges in the graph, and where no element appears in more than $r$ sets. 
\end{corollary}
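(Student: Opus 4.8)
The plan is to combine the static hypergraph maximal matching algorithm of the Lemma above with the standard reduction from $r$-approximate set cover to rank-$r$ hyperedge maximal matching~\cite{AS21}, and then verify that the cover can be extracted within the stated resource bounds. First I would build the reduction hypergraph $H$: introduce one vertex for every set and one hyperedge for every element, where the hyperedge of element $x$ is incident exactly on the vertices of the sets containing $x$. Since each element lies in at most $r$ sets, $H$ has rank at most $r$; moreover the number of hyperedges equals the number of elements and the total cardinality of $H$ equals $\sum_x |\{S : x \in S\}|$, which is exactly the input size $m'$. Thus $H$ has $m$ edges and total cardinality $m'$, matching the hypotheses of the Lemma verbatim.

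Next I would run the static maximal matching algorithm on $H$, obtaining a maximal matching $M$ in $O(m')$ expected work and $O(\log^2 m)$ \depth{} whp. The candidate cover $C$ is the collection of all sets whose corresponding vertex is incident to some matched hyperedge. Extracting $C$ costs no more than the matching itself: allocate one Boolean flag per set, and in parallel over the at-most-$m'$ (edge, vertex) incidences of the matched hyperedges, mark the flag of each incident vertex. The writes are benign (all set the same value), so this adds only $O(m')$ work and $O(1)$ additional \depth{}, keeping the overall bounds at $O(m')$ work and $O(\log^2 m)$ \depth{}.

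It remains to establish the $r$-approximation guarantee, which is where the combinatorial content lies. For validity, every element $x$ is covered: if its hyperedge is in $M$ then all of its sets are selected, and otherwise, by maximality of $M$, its hyperedge shares a vertex with some matched hyperedge, so that shared set belongs to $C$ and contains $x$. For the ratio, each matched hyperedge contributes at most $r$ sets, so $|C| \le r|M|$. Because the hyperedges of $M$ are pairwise vertex-disjoint, no single set contains two distinct matched elements; hence any feasible cover---in particular an optimal one---must use at least $|M|$ distinct sets just to cover the elements of $M$, giving $\mathrm{OPT} \ge |M|$. Combining these yields $|C| \le r|M| \le r\,\mathrm{OPT}$.

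I do not expect a serious obstacle here: the reduction is essentially a relabeling, and the Lemma supplies the dominant cost, so the only points needing care are confirming that the cover-extraction step stays work-efficient and shallow in parallel (handled by the concurrent-write marking above) and that the disjointness of $M$ correctly yields the lower bound $\mathrm{OPT} \ge |M|$ on the optimum.
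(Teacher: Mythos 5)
Your proposal is correct and follows exactly the route the paper intends: the corollary is stated as an immediate consequence of the standard set-cover-to-rank-$r$-matching reduction from Assadi and Solomon (sets as vertices, elements as hyperedges) combined with the static matching lemma, which is precisely your construction. The paper leaves the verification implicit, and your filled-in details---the cost of cover extraction, validity via maximality, and $|C| \le r|M| \le r\,\mathrm{OPT}$ from vertex-disjointness of $M$---are all sound.
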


These algorithms are the current parallel static and parallel batch-dynamic state of the art for $r$-approximate set cover. Note that there are two types of set cover approximations, those based on the maximum frequency of any element ($r$), and those based on the maximum size of any set ($\Delta$) \cite{DDLM24}. There are many parallel results with good work bounds, but which sacrifice a $(1+\epsilon)$ factor in the approximation ratio; hypergraph matching is unusual in providing precisely an $r$-approximate cover. Thus, our set cover bounds improve on the bounds given by previous parallel matching algorithms \cite{BFS12, GT24}. For more information on parallel set cover, see Dhulipala et al. \cite{DDLM24}.

For the full version of the paper, see arXiv \cite{BB25a}.  

\subsection{Relationship to Prior Algorithms}
\label{sec:compare}

Our algorithm borrows many ideas from prior level-based dynamic
matching algorithms~\cite{OR10,BGS11,Sol16,AS21,GT24}.  It is most similar to the Assadi and
Solomon~\cite{AS21} hypergraph algorithm.  Like~\cite{AS21}
and~\cite{Sol16}, and unlike the others it is lazy: although a level accurately represents the original sample
size of an edge when settled, the degree of an edge can grow much larger before
it is resettled.   This is what allows it to be work efficient.

All of these prior algorithms have the following pattern. First, observe that inserting edges and deleting unmatched edges is easy; the interesting case is deleting matched edges. By choosing which edges are matched from a large sample space, one can protect them from the oblivious adversary---i.e., one can amortize the expensive cost of deleting the matched edge with the many cheap deletes of unmatched edges in the sample space. To keep track of the sample space associated with a matched vertex, a leveling scheme is kept: higher level means a larger sample space.  Starting with the analysis by BGS ~\cite{BGS11}, all the algorithms distinguish matched edges that are deleted by the user (natural deletions) from those deleted indirectly by the algorithm itself (induced deletions).  The induced deletions need to be charged to the natural ones.

As in previous parallel batch-dynamic algorithms, we use a parallel static maximal matching on insertions to determine which edges to match \cite{Liu2022parallel,GT24}. 
Our algorithm differs from prior algorithms in important ways,
however, and adds several new ideas to achieve parallel work
efficiency.  One of the most important differences is the random
sampling approach.  Instead of selecting a single edge among the
neighbors of a vertex (or in some special cases in~\cite{AS21}, an
edge), it selects the samples using a random greedy maximal matching algorithm across many vertices and
edges.
We show that the randomness in this very
natural algorithm (the trivial greedy maximal matching on a random
ordering of the edges) is sufficient.

An important difference in the analysis of our algorithm is how we
charge induced deletions to natural deletions.  In the prior
algorithms it is done by charging to each newly matched edge some
number of deleted edges on a lower level.  Ultimately these are
charged to a natural deletion.  They therefore carefully have to
ensure that a matched edge only induces deletes on strictly lower
levels.  We instead allow a newly matched edge at a low level to
induce a deleted edge at an arbitrarily higher level. Thus, we only
make an aggregate argument that upper bounds the total size of the
sample space of induced deletions relative to the sample space of
newly created matches.  Ultimately the total sample space of induced
deletions is charged to the sample space of natural deletions.

This leads to another notable change relative to prior hypergraph
algorithms~\cite{AS21,GT24}.  In particular, in their leveling schemes
the levels differ by a factor of $\Theta(r)$.  In our scheme the
levels differ by a factor of $2$.  This is important in our charging
scheme, as discussed in the analysis.  Another difference from the the
lazy schemes~\cite{Sol16,AS21} is that we terminate after at most
$O(\log n)$ ``recursive levels.'' This is important for bounding the
depth.  In Solomon's (2016) algorithm \cite{Sol16}, in certain cases the 
recursion depth is $\Omega(n)$.  It is not clear whether Solomon's 
charging scheme could 
handle terminating after $O(\log n)$ rounds, but it is relatively easy 
to show that our charging scheme does support early termination.

We also differ significantly from Ghaffari and Trygub's parallel 
batch-dynamic algorithm~\cite{GT24}. Their algorithm is based on
BGS instead of Solomon's lazy approach.  It is therefore unlikely that
GT's algorithm can be improved to $O(1)$ work per edge for standard
graphs, even with improvements to all the components.  The GT
algorithm also uses a significantly more complicated method for
randomly selecting edges. In GT, edges are processed by level top-down, 
whereas our maximal matching processes all levels together.
Furthermore, GT has a bespoke method for finding the matching on each
level that involves a triply nested loop.  To ensure that not too many
edges are selected at the same time in parallel, the edges are sampled
at varying density across the iterations of the loops.  The analysis
is quite involved.  Together this leads to the high cost per edge
update.




\section{Preliminaries}

\paragraph{Hypergraph matching}

A \emph{hypergraph} $H = (V,E)$ consists of a set of vertices $V$ and
a set of edges $E \subset 2^V$. Let $n=|V|,m=|E|$ and $m'=\sum_{e \in E} |e|$. The rank of a hypergraph is the maximum
cardinality of any of its edges.  We will be considering hypergraphs of
bounded rank, and will use $r(H)$ or just $r$ to indicate the rank.
We say two edges are \emph{incident} on each other if they share a
vertex, and an edge is incident on a vertex if it includes the vertex
in its set. Two edges are \emph{neighbors} if they are incident on each other. On a hypergraph $H = (V,E)$ a set of edges $M \subset E$
is said to be a \emph{matching} if no two edges in $M$ are incident on
each other.  It is a \emph{maximal matching} if all edges in
$E \setminus M$ are incident on $M$.   We call an edge or vertex 
\emph{free} if it is not incident on any matched edge.

\paragraph{Parallel Model}

We assume the binary forking model with work-\depth{}
analysis~\cite{Blelloch96,blelloch2019optimal}.  In the model, a thread
can fork off any number of child threads which run in parallel, but
asynchronously.  When all finish, control returns to the parent, which
is the join point.  
Such forking can be nested.  Beyond the fork
instruction, each thread executes a standard RAM instruction set.  We
say there is a dependence between two instructions if they are in the
same thread, or one is in an ancestor thread of the other, and
otherwise they are concurrent.  The cost is measured in terms of work
and \depth{}.  The work is the total number of instructions executed by
all threads, and the \depth{}
is the longest chain of dependent instructions.  We allow for
concurrent writes (with constant work per write, and constant \depth{}), but the ordering is non-deterministic.  In
particular, we assume that the values that are read are consistent
with some total order of the instructions that is compatible with the
partial order defined by the dependences.
A PRAM algorithm with $p$ processors and $t$ time can be simulated on
the binary forking model in $O(pt)$ work and $O(t)$ \depth{}.  A binary forking
algorithm with $w$ work and $t$ time can be simulated on the PRAM in
$O(w/p + t \log^* w)$ time randomly w.h.p, or $O(w/p + t \log\log p)$
time deterministically~\cite{blelloch2019optimal}. 
Because mapping from the binary forking model to the PRAM only adds at worst a $O(\log^* w)$ factor to the time in the randomized case, our results are quite robust across the models.

\paragraph{Dynamic model}

We are interested in batch-dynamic maximal matching.  In particular we
support inserting and deleting sets (batches) of edges.  Each batch
can be of arbitrary size.  We also assume we can report for any vertex
its matched edge (or none if the vertex is free) in constant expected time, and that we can return the
set of matched edges in expected work proportional to the size of the
set.  We assume edges have unique identifiers so they can be hashed or
compared for equality in constant time (even though they might have
$r$ endpoints).  This is not an unreasonable assumption since one can
always keep a hash table that hashes an edge to its label
external to our algorithm and then use the labels internally.  As with
much prior work on dynamic maximal
matching~\cite{BGS11,Sol16,AS21,GT24} our algorithms are randomized
and we assume an oblivious adversarial user.  The oblivious adversary does
not know anything about the random choices made by the algorithm.

\paragraph{Standard Algorithms}

We make use of several standard parallel algorithms, which we outline
here.  We make use of prefix sums and filtering, which both require
$O(n)$ work and $O(\log n)$ \depth{}~\cite{blelloch1990pre}.  The filter
operation takes an array with a subset of elements marked and returns
an array with just the marked elements, maintaining order among them.
We will need to generate a random permutation.  For a sequence of
length $n$ this can be done in $O(n)$ expected work and $O(\log n)$
\depth{}~\cite{Gil91a}.

We will make significant use of semisorting~\cite{Valiant91}.  The
semisorting problem is given a set of keys, to organize them so equal
keys are adjacent in the output, but not necessarily sorted.  For
semisorting we assume we have a hash function on the keys.
Semisorting a collection of $n$ keys requires $O(n)$ expected work and
$O(\log n)$ \depth{} whp ~\cite{Valiant91,GSSB15}.  Semisort can, in turn, be
used for several other operations.  In particular we will need a
\emph{groupBy} function that takes a set of key-value pairs and groups
all values with the same key into its own subset.  It returns a set of
pairs, each consisting of a unique key and all the values with that
key.  This is easily implemented by a semisort followed by a prefix
sum to partition into subsets.  We will also need a \emph{sumBy}
operation that takes a set of key-value pairs where the values are
integers, and for each unique key sums up the values with that key.
It returns a set of key-value pairs where the keys are unique and the
values are the corresponding sums.  This again can be implemented with
a semisort. We also need a \emph{removeDuplicates} function, that given a 
list with repeated elements, returns the unique list. This can be implemented with 
a groupBy. In our pseudocode, when we use set notation $\{\}$, we are 
implicitly calling removeDuplicates as needed.

We also make use of dictionary data structures based on hashing.  A
parallel dictionary structure supports batches of insertions,
deletions, and membership queries.  A batch of any of these operations
$k$ can be implemented in $O(k)$ expected work and $O(\log^* k)$ \depth{} whp in $k$ (which is $O(\log n)$ \depth{} whp in $n$) ~\cite{Gil91a}.  This assumes the dictionary is large enough to
hold the elements (i.e., the load factor is constant).  We will need
growable/shrinkable dictionaries, but given that we are happy with
amortized bounds this can be implemented using the standard
doubling/halving tricks (i.e., when too full, double the size and copy
the old contents into the new, and when too empty, half the size and
copy).  In our analysis we will assume that batch of $k$ operations
can be applied to a dictionary of size $n$ in $O(k)$ expected amortized
work and $O(\log (n+k))$ \depth{} whp.  Also beyond membership queries we
will need to extract a sequence of all the current elements.  This can
be performed with $O(n)$ expected work and $O(\log (n+k))$ \depth{} whp.  The size of each
dictionary is $O(n)$. 

Finally we will need a \ccode{findNext} function that, given an index
$i$ in an array, finds the next index $j$ that satisfies some
predicate.  It needs to run in $O(j -i)$ work and $O(\log (j-i))$
\depth{}.  This can be implemented using doubling then binary search.  On each round $k$,
search the next $2^k$ elements.  In a round, in parallel each element in 
the range checks the predicate and writes to a flag if
satisfied.\footnote{We can do this because we support concurrent writes with a non-deterministic ordering.} If none is found in a round, then go to the next round.
Each round takes $O(2^k)$ work and $O(1)$ \depth{}. 
When a round succeeds, we then apply binary search over that range.
In each step of binary search, we check if the left half of the list contains an
 alive element by concurrently writing to a flag. If the left half contains
  an alive element, we recurse on the left half; otherwise, we recurse on 
  the right half. Thus the index of the next element that satisfies the predicate 
  can be identified in $O(2^k)$ work and $O(k)$ \depth{}.  The total work 
  and \depth{} is therefore within the required bounds.
  
\paragraph{Parallel insertions, deletions and increments}
In our pseudocode, for simplicity, we often insert and delete elements
from various sets and maps in parallel loops where the parallel iterations
might be on the same key.  In our parallel model, this would require
first applying a groupBy to gather elements based on the set they are
updating, and then applying a batch update to each such set.
Similarly, in the \ccode{updateTop} code in Section
\ref{Section:RGMM}, we increment various counters in parallel.  This
can be implemented with a sumBy.  Both of these require at most
expected amortized work proportional to the number of updates, and
logarithmic \depth{} whp.  The amortization comes from possibly
needing to grow or shrink some of the hash tables for the sets.

\section{Random Maximal Hyperedge Matching \label{Section:RGMM}}

In our dynamic algorithm, we must find a maximal matching on a
subgraph in two situations: when inserting edges and during random
settles.  When inserting edges, any fast maximal matching would
suffice.  However, for the random settles, we need, among other
properties, for the matched edges to be picked with sufficient
randomness to protect them from the oblivious adversary.  For brevity
we use the same algorithm for both cases, and in particular we develop
an efficient parallel implementation of the static sequential random
greedy algorithm for this purpose.

The sequential random greedy maximal matching algorithm first randomly
permutes the edges of a graph, and then passes through them once in
the random order, selecting an edge if it has not been deleted, and if
it is selected deletes all incident edges.  We refer to the order in
the permutation as the \emph{priority} (highest first).  The output of this algorithm is known as the the lexiographically first maximal matching \cite{Cook85}. For our
purposes, it is important to keep track of the edges that are deleted
when a matched edge is selected.  For a matched edge $e$ we refer to
these edges as the \emph{sample space} $S_e$ of $e$. The sample space
includes $e$ itself.  Note that the sample spaces of the matched edges
form a partitioning of the original edges since every edge is deleted
exactly once.  Figure~\ref{fig:seqmm} gives pseudocode for
the algorithm.

\begin{figure}
\begin{lstlisting}
@\cinput{A Graph $G=(V,E)$}@
@\coutput{A matching augmented with sample spaces}@
sequentialGreedyMatch$(G=(V,E))$:
  $\pi \assign$ getRandomPermutation()
  $E' \assign  E \mbox{ sorted by } \pi$
  $X \assign \emptyset$
  for $e \in E'$:
    if $\mbox{free}(e)$:
      $\mbox{free}(e) \assign false$, $S_e \assign  \{e\}$
      for $\mbox{edges } e' \mbox{ incident to } e$ :
        if $\mbox{free}(e')$ :
          $\mbox{free}(e') \assign false$, $S_e \assign  S_e \cup \{e'\}$
      $X \assign  X \cup \{(e,S_e)\}$
  return $X$
\end{lstlisting}
\caption{Sequential random greedy maximal matching.}
\label{fig:seqmm}
\end{figure}

Greedy maximal matching can be implemented to run in parallel,
returning the identical output.  We call an edge $e$ a \emph{root} of
a graph if all remaining edges incident to $e$ have later priority
than $e$.  In each round of the parallel algorithm, we add all roots
($W$) to the matching, remove the matched edges along with their
incident edges, and update the root set.  Since edges in the root set
can have mutually neighboring edges, to mimic the sample spaces of the
sequential algorithm, we assign such a contested edge to the sample
space of the highest priority incident root.  Blelloch, Fineman and
Shun (BFS) show that this procedure takes $O(\log^2 n)$ rounds whp
\cite{BFS12}; and the analysis was improved to $O(\log n)$ rounds whp
by Fischer and Noever (FN) \cite{FN20}. Both of these results reduce matching
to finding a maximally independent set of edges, where independence is with respect to edge incidence.  For
$r = 2$, BFS also provide a work efficient (linear work) parallel
algorithm for this approach that runs in polylogarithmic depth.  A
direct translation of this algorithm to hypergraphs, however, is not
work efficient.

Here we develop a work-efficient maximal matching algorithm on
hypergraphs by modifying the method by which the root set is updated
to yield $O(m')$ work, where $m'$ is the total cardinality of the
hypergraph.  Towards this end we keep track of the following data
structures.  
For each vertex $v$, we store the edges incident on $v$ in two data structures. 
We maintain the remaining edges adjacent to $v$,
$N(v)$, as an unordered set; when an edge incident to $v$ is removed from the graph, we remove this edge from $N(v)$.  
We also maintain \ccode{edges}$(v)$, an
array of edges sorted by priority, and \ccode{top}$(v)$, the index of the
highest priority remaining edge in \ccode{edges}$(v)$.
We do not remove edges from \ccode{edges}$(v)$;
instead, we increase \ccode{top}$(v)$ to indicate that prior edges are
deleted. Each edge $e$ has an atomic counter
(\ccode{counter}), initially set to zero, indicating on how many
vertices it is the highest priority remaining edge.  We set
\ccode{done}$(e)$ to be true if $e$ is being deleted this round, and
false otherwise.

The algorithm is described in Figure~\ref{fig:parallelmm}.  The root
set is maintained in $W$ and each round is given by an iteration of
the while loop.  The first \ccode{groupBy} in the while loop gathers, for
each edge adjacent to the root set, the neighboring roots.  The
second \ccode{groupBy} then gathers for each root $w$ its sample space
(i.e., the neighbors $e$ of $w$ for which $w$ has the highest priority
among $e$'s neighboring roots).  We now have to delete the edges in
$W$ and its neighbors (together \ccode{finished}), update the
appropriate data structures, and generate the new root set.  Note that
an edge is a root iff it is the highest priority remaining edge with
respect to all of its vertices. When
\ccode{edges}$(v)$[\ccode{top}$(v)$] is set to be deleted, we call
\ccode{updateTop}$(v)$, which finds $v$'s new highest remaining edge
$e'$ and increments \ccode{counter}$(e')$. If \ccode{counter}$(e')$
equals the rank of $e'$, we add it to the new root set. We repeat on
the new root set and remaining graph.

\begin{figure}
  \begin{lstlisting}
@\cinput{A vertex $v$}@
@\ceffect{Corrects top$(v)$ if needed}@
@\coutput{Returns edge to add to frontier if needed}@
updateTop(v) :
  if $\mbox{done}(\mbox{edges}(v)[\mbox{top}(v)])$ : 
    $\mbox{top}(v)\assign \mbox{findNext}(\mbox{top}(v), \mbox{edges}(v))$
    if $\mbox{top}(v) \neq |\mbox{edges}(v)|$ :
      $e_t = \mbox{edges}(v)[\mbox{top}(v)]$
      $\mbox{increment}(\mbox{counter}(e_t))$
      if $\mbox{counter}(e_t) = |V(e_t)|$ : return $e_t$
  return $\bot$

@\cinput{A Graph $G=(V,E)$}@
@\ceffect{Clears $G$}@
@\coutput{A matching augmented with sample spaces}@
parallelGreedyMatch($G=(V,E)$):
  $\pi \assign$ getRandomPermutation()
  parfor $v \in V$ :
    $\mbox{edges}(v) \assign$ sort $\{e\; |\; v \in e\} \mbox{ by } \pi(e)$
    $\mbox{top}(v) \assign 0$
    $\mbox{increment}(\mbox{counter}(\mbox{edges}(v)[0]))$
  $W \assign \{e \in E \;|\; \mbox{counter}(e)=|V(e)|\}$
  $X \assign \emptyset$
  while (|W| > 0) :
    $D \assign \mbox{groupBy}(\{ (e, w) : w \in W, e \in N(V(w)) \})$
    $X' \assign \mbox{groupBy}(\{(\arg \min_{e \in F} \pi(e)), e) : (e, F) \in D\})$  
    $X \assign X \cup X'$
    $\mbox{finished} \assign W \cup N(V(W))$
    parfor $e \in \mbox{finished}$ : $\mbox{done}(e) \assign true$
    $V_f \assign \bigcup_{e \in \mbox{finished}} V(e)$
    $W' \assign  \{\mbox{updateTop}(v): v \in V_f\}$
    $W \assign  \{e \in W' \;|\; e \neq \bot\}$
    parfor $e \in \mbox{finished}, v \in V(e)$ : $\mbox{delete}(N(v), e)$
  return $X$
\end{lstlisting}
  \caption{Parallel random greedy maximal matching.  In the code, $V(e)$ indicates the vertices
  incident on $e$, and for a set of vertices $V$, $N(V)$ indicates the union of all edges incident on at least one vertex in $V$.}
  \label{fig:parallelmm}
\end{figure}

We now analyze the cost of \ccode{parallelGreedyMatch}.

\begin{lemma} All of the calls to \ccode{updateTop} cost, in aggregate, $O(m')$ work, where $m'$ is the total cardinality of the edges and $m$ is the total number of edges. Each call individually is $O(\log m)$ \depth{}. \end{lemma}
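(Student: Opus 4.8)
The plan is to bound the total work by an amortization argument over the pointer $\ccode{top}(v)$ at each vertex, then sum over all vertices. The only non-constant cost inside a single \ccode{updateTop}$(v)$ call is the \ccode{findNext} invocation, which by its stated bound advances $\ccode{top}(v)$ from its old value $i$ to a new value $j$ at cost $O(j-i)$ work and $O(\log(j-i))$ \depth{}. Every other line --- testing \ccode{done}, reading $\ccode{edges}(v)$, the atomic \ccode{increment}, and the comparison to $|V(e_t)|$ --- is $O(1)$ work and $O(1)$ \depth{}. So I would first reduce the claim to bounding, for each vertex $v$, the sum of $(j-i)$ over all \ccode{findNext} calls on $\ccode{edges}(v)$, together with an $O(1)$ charge per call.

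The key observation is that $\ccode{top}(v)$ is monotonically non-decreasing: it is modified only inside \ccode{updateTop}$(v)$, and only by \ccode{findNext}, which (since the current entry $\ccode{edges}(v)[\ccode{top}(v)]$ is \ccode{done}) returns a strictly larger index; and it is bounded above by $|\ccode{edges}(v)| = \deg(v)$, the number of edges incident on $v$. Consecutive \ccode{findNext} calls on $v$ therefore start exactly where the previous one finished, so the intervals $[i_1,j_1),[i_2,j_2),\dots$ are disjoint subintervals of $[0,\deg(v)]$; their lengths telescope and sum to at most $\deg(v)$, whence the total \ccode{findNext} work charged to $v$ is $O(\deg(v))$. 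For the per-call $O(1)$ overhead I would note that \ccode{updateTop}$(v)$ is invoked only when $v \in V_f$, i.e.\ when some edge incident on $v$ is \ccode{finished} that round; since each edge is finished in exactly one round, $v$ can enter $V_f$ in at most $\deg(v)$ distinct rounds, giving at most $\deg(v)$ calls. Summing both contributions, the work charged to $v$ is $O(\deg(v))$, and $\sum_{v} O(\deg(v)) = O\left(\sum_{e} |e|\right) = O(m')$, since $\sum_v \deg(v)$ counts each edge once per endpoint.

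For the \depth{} bound, an individual call is dominated by its \ccode{findNext}, of \depth{} $O(\log(j-i))$; since $j - i \le |\ccode{edges}(v)| \le m$, this is $O(\log m)$, and the atomic increment adds only $O(1)$ \depth{} (the batched \ccode{sumBy} over a round's increments is absorbed separately into the round's cost). The step I expect to require the most care is the amortization: I must argue that the doubling-based \ccode{findNext} cannot be charged more than $O(j-i)$ despite overshooting its target, and that the telescoping is valid because $\ccode{top}(v)$ never moves backward and is never reset between rounds. The latter relies on \ccode{done} being set monotonically and never cleared, so that entries scanned over are genuinely deleted and never revisited by a later \ccode{findNext} on the same vertex.
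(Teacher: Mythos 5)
Your proof is correct and follows essentially the same route as the paper's: the paper also amortizes via the monotone pointer $\ccode{top}(v)$, observing that across all calls the pointers slide a total distance of $O(m')$ at $O(d)$ work per slide of distance $d$ (using the stated \ccode{findNext} bound), with each call's \depth{} bounded by $O(\log m)$ since edge lists have at most $m$ entries. Your additional accounting of the $O(1)$ per-call overhead (at most $\deg(v)$ calls per vertex, since $v$ enters $V_f$ only in rounds where an incident edge finishes) is a detail the paper leaves implicit, but it does not change the argument.
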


\begin{proof}
Each call to \ccode{updateTop}$(v)$ slides $v$'s pointer along its edge list to the right, to the new highest priority remaining edge. Using the findNext operation, a call that slides a pointer $d$ distance costs $O(d)$. Because there are $m'$ total edges across the lists, the pointers slide a total of $O(m')$ distance, for $O(m')$ work. Since the edge lists have at most $m$ members, each findNext call takes $O(\log m)$ \depth{}, so $\ccode{updateTop}$ has $O(\log m)$ \depth{}. \end{proof}

\begin{theorem} \ccode{parallelGreedyMatch} takes $O(m')$ expected work and $O(\log^2 m)$ \depth{} whp.
\label{lemma:greedymatchcost}\end{theorem}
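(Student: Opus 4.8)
The plan is to bound depth and work separately. For the depth, I would first argue that the number of iterations of the while loop is $O(\log m)$ whp. The loop is exactly the parallel greedy elimination process: in each round it commits all current roots to the matching and removes them together with their incident edges. Our data structures (\ccode{top}, \ccode{counter}, \ccode{updateTop}) change only \emph{how} we discover the new roots, not \emph{which} edges are roots in each round, so the round count is inherited from the Fischer--Noever bound of $O(\log m)$ rounds whp for random greedy matching (viewed as MIS on the $m$ edges). Then I would check that each round runs in $O(\log m)$ depth whp: each round performs a constant number of parallel primitives, namely two \ccode{groupBy}s (semisort, $O(\log m)$ depth whp), an $\arg\min$ over each group (a reduction, $O(\log m)$ depth), the construction of $V_f$ via a dedup/\ccode{groupBy}, the parallel \ccode{updateTop} calls (each $O(\log m)$ depth by the preceding lemma, run concurrently), a filter, and a batch of dictionary deletions from the $N(v)$ sets ($O(\log m)$ depth whp). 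Multiplying the $O(\log m)$ rounds by the $O(\log m)$ per-round depth, and taking a union bound over the $O(\log m)$ rounds, gives $O(\log^2 m)$ depth whp.

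For the work I would first dispense with initialization. Generating the permutation is $O(m)$ work; building each \ccode{edges}$(v)$ sorted by priority is a single sort of the $m'$ incidence pairs under the composite integer key (vertex identifier, priority), which lies in a polynomial range and is therefore handled by parallel integer sort in $O(m')$ work; the initial \ccode{counter} increments cost $O(m)$. So initialization is $O(m')$.

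The crux is bounding the aggregate work of the while loop by $O(m')$. The key structural observation is that the roots in any single round form a matching: if two roots were incident they would share a vertex $v$, yet each root is by definition the highest-priority remaining edge at every one of its vertices, a contradiction at $v$. Consequently, for any edge $e$, the roots incident to $e$ in a given round occupy pairwise-disjoint vertices of $e$, so at most $|e|$ of them can be incident to $e$. Combining this with the fact that every edge is placed into \ccode{finished} and removed from all its $N(v)$ in \emph{exactly one} round, I would bound the size of the pre-\ccode{groupBy} set $\{(e,w): w\in W,\ e\in N(V(w))\}$ summed over all rounds by $\sum_e |e| = m'$; the $\arg\min$ work, the construction of $V_f = \bigcup_{e\in\text{finished}} V(e)$, the \ccode{done} marking, and the deletions from the $N(v)$ dictionaries are each likewise bounded by $\sum_{e}|e| = m'$ over the whole execution. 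Adding the $O(m')$ aggregate cost of \ccode{updateTop} from the preceding lemma, and using that semisort and batch dictionary operations cost expected work linear in their input, yields $O(m')$ expected total work.

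I expect the main obstacle to be the work analysis rather than the depth: specifically, establishing that the roots form a matching and translating this into the per-edge bound ``incident to at most $|e|$ roots in its finishing round,'' and then carefully verifying that each edge contributes to the \ccode{groupBy} and deletion costs in only a single round (which is what prevents the naive $\Theta(\sum_w |N(V(w))|)$ bound from being much larger than $m'$). The depth argument is comparatively routine once the $O(\log m)$ round bound is imported from Fischer--Noever, modulo the union bound over rounds to combine the per-round whp guarantees.
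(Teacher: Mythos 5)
Your proposal is correct and takes essentially the same route as the paper's proof: $O(m')$ initialization, charging $O(|e|)$ work to each edge in the single round in which it is finished plus the $O(m')$ aggregate \ccode{updateTop} bound from the preceding lemma, and depth obtained as $O(\log m)$ Fischer--Noever rounds times $O(\log m)$ depth per round of parallel primitives. The only differences are that you make explicit the (correct) observation that the roots in a round form a matching---so each edge meets at most $|e|$ roots, justifying the per-edge charge the paper states without comment---and that you justify building the \ccode{edges}$(v)$ arrays via polynomial-range parallel integer sort, whereas the paper uses expected-linear bucket sort on the random priorities, which is the safer citation since work-optimal parallel integer sorting over a polynomial key range is not a standard off-the-shelf black box.
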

\begin{proof}  Generating the permutation for the edges takes linear work. Because the edges are given random priorities, bucket sorting takes $O(m)$ work in expectation and $O(\log m)$ \depth{} whp \cite{BFS12}\cite[p. 200-204]{CLRS}. Each edge $e$ is deleted once and costs $|e|$ work outside of \ccode{updateTop}, for $O(m')$ work. In aggregate, internal calls to \ccode{groupBy} cost no more than $O(m)$ in expectation. Since \ccode{updateTop} in aggregate costs $O(m')$, we have $O(m')$ work total. Since \ccode{updateTop} and filter have $O(\log m)$ \depth{}, each round of GreedyMatch is $O(\log m)$ \depth{}. By FN there are $O(\log m)$ rounds whp \cite{FN20}, for $O(\log^2 m)$ \depth{} total. \end{proof}


\subsection{Random Selection in Greedy Maximal Matching}
\label{Section:3.1}


In Solomon's algorithm \cite{Sol16}, about half of the sampled edges get deleted before the matched edge, because the matched edge was chosen uniformly at random from the sample set. This property is crucial for the analysis, because it shows that an expensive matched edge deletion can be amortized against many cheap updates (the unmatched deletions before it). In Solomon's algorithm, the sample space is chosen deterministically (based on the present graph state), then a random mate is chosen. 

This is not true of \ccode{parallelGreedyMatch}, which chooses a matched edge and its sample space together, in the same random process. Note that we do not tell the adversary the sample space partitioning of a graph. But if we did tell the adversary the sample space of a matched edge, because the adversary knows the structure of the graph, the adversary would learn something about which edge was likely matched. For example, consider the path on 4 vertices $(1,2), (2,3), (3,4)$. Suppose that \ccode{parallelGreedyMatch} is run and the sample space of one of the matched edges is $\{(1,2),(2,3),(3,4)\}$. Since a matched edge is incident on all its sample space it follows that $(2,3)$ was matched. Thus, even knowing the size of the sample space reveals information about which edge was likely matched. We still can demonstrate that \ccode{parallelGreedyMatch} has the necessary randomness to protect its matched edges from the adversary, but we must approach the analysis differently. Note that the analysis presented in this section is independent of the dynamic matching that follows. In particular, in this setting, induced and natural epochs do not exist. We use the results from this section as a black box in Section \ref{sec:overall-cost} to show bounds on the dynamic algorithm.

We start by presenting an alternative simple proof for the case where
the sample space is chosen deterministically, and then extend to the
general case.\footnote{This proof seems significantly simpler than the
proofs in~\cite{Sol16,AS21}.}  To each matched edge $e \in E$, we
assign price $|S_e|$, which has to be paid for. Unmatched edges
are given price 0.  In the dynamic matching setting, after an element
is chosen to be matched from the sample space, the user deletes one
edge $d_t$ per time step $t$ until all of the edges of the sample
space are deleted.\footnote{The amortized analysis assumes the graph
ends up empty.}  The user does these deletes in an arbitrary and
adversarial order, but since the user is oblivious, she does not know
which edge in the sample space was matched.  Let $p(e)$ indicate the
matched edge for which $e$ is in its sample space ($S_{p(e)}$), allowing us later
to consider multiple matched edges and corresponding sample spaces.

We say that when the user deletes an edge $d_t$ after $p(d_t)$ was
deleted, that the deletion of $d_t$ is \emph{late}, and otherwise
it is \emph{early}.  Note that if $d_t = p(d_t)$, then it is early.
We need to pay the price for a matched edge with the user
deletes on its sample space.  We cannot, however, use late deletes to
pay the price since after the match is deleted a new incident edge is
matched and the future delete of an edge might have to pay for that
one.  We therefore charge all the price of a match to early deletes on
edges in its sample space, and will argue that with an expected payment
of at most 2 units per delete we can cover the full price.

We use the following charging scheme.  When the user deletes an early
unmatched edge $e$, it pays 1 unit to subtract 1 unit from
$p(e)$'s price.  When the user deletes a matched edge $e$, which must
be early, it pays the remaining price of $e$---i.e., the original
price minus the number of edges in its sample space that have been
visited.  If $e$ is late it pays nothing.  Note that in this scheme
once all edges have been deleted the payment of early deletions fully
covers the price.  Let $\Phi(e)$ be the random variable for the payment
made when deleting the edge, and $U(e)$ be the set of edges in the sample set
of $e$ just before it is deleted.  We are interested in $\E[\Phi(e)]$
for early deletions.

\begin{proof} \emph{(For deterministically chosen sample spaces, if $d_t$ is an early delete, then $\E[\Phi(d_t)] < 2$.)}
  Given the two cases for costs for early deletes, we have
that $$E[\Phi(e)] = p_{\mbox{\small unmatched}} \times 1 +
p_{\mbox{\small matched}} \times |U(e)|\; .$$ However $p_{\mbox{\small
    matched}}$ is exactly $1/|U(e)|$, since each remaining edge is
equally likely to have been picked from the sample space.  Hence we
have $E[\Phi(e)] < 1 \times 1 + \frac{1}{|U(e)|} \times |U(e)| = 2$.
\end{proof}

Note that to extend this argument to our more complicated situation
where the sample space is itself randomized, the first term remains
the same since the probability is always upper bounded by 1.  The
second term, however, is more complicated since ``each remaining edge is equally
likely to have been picked from the sample space'' is not well defined
since the sample space depends on the random permutation in our
matching.  Intuitively, it still works because the possible
permutations can be partitioned into equivalence classes each
consisting of a fixed sample space, and within each equivalence class
we can argue that every element in the sample space is equally likely.
We formalize this idea in the following theorem and proof.


\begin{lemma} \label{Lemma:3.4}
For sample spaces created by \ccode{sequentialGreedyMatch},
if $d_t$ is an early delete, then $\E[\Phi(d_t)] \le 2$.
\end{lemma}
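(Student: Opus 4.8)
The plan is to work entirely over the randomness of the permutation $\pi$ chosen by \ccode{sequentialGreedyMatch}, treating the oblivious user's full deletion order $d_1,\dots,d_N$ (with $N=|E|$) as fixed and independent of $\pi$. Since the event that $d_t$ is \emph{early} is itself a function of $\pi$, I would write $\E[\Phi(d_t)\mid\text{early}] = \E[\Phi(d_t)\,\mathbf{1}[\text{early}]]/\Pr[\text{early}]$, so that it suffices to prove $\E[\Phi(d_t)\,\mathbf{1}[\text{early}]]\le 2\,\Pr[\text{early}]$. First I would split the payment according to the charging scheme: when $d_t$ is matched (which is automatically early) it pays $|U(d_t)|$, and when $d_t$ is unmatched but early it pays $1$, giving $\Phi(d_t)\,\mathbf{1}[\text{early}] = |U(d_t)|\,\mathbf{1}[d_t\text{ matched}] + \mathbf{1}[d_t\text{ unmatched and early}]$. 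Because $\Pr[\text{early}] = \Pr[d_t\text{ matched}] + \Pr[d_t\text{ unmatched and early}]$, the whole inequality reduces to controlling the single quantity $\E[|U(d_t)|\,\mathbf{1}[d_t\text{ matched}]]$.

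The heart of the argument is a priority-swapping symmetry. When $d_t$ is matched, $U(d_t)$ is exactly the set of $f\in S_{d_t}$ not yet deleted at time $t$, i.e. those $f\in\{d_t,\dots,d_N\}$ with $f\in S_{d_t}$; expanding the cardinality as a sum of indicators gives $\E[|U(d_t)|\,\mathbf{1}[d_t\text{ matched}]] = \sum_{f\in\{d_t,\dots,d_N\}}\Pr[d_t\text{ matched and }f\in S_{d_t}]$. The key claim I would establish is that for $f\neq d_t$, $\Pr[d_t\text{ matched and }f\in S_{d_t}] = \Pr[f\text{ matched and }d_t\in S_f]$. Note that $f\in S_{d_t}$ forces $f$ to be incident to $d_t$ with $\pi(d_t)<\pi(f)$; let $\pi'$ transpose the priorities of $d_t$ and $f$. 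Since only these two priorities change and both exceed $\pi(d_t)$, the greedy process is identical for all priorities below $\pi(d_t)$, so the graph states just before that step agree under $\pi$ and $\pi'$. In that state both $d_t$ and $f$ are free (the latter because it is about to be deleted by $d_t$ under $\pi$), so under $\pi'$ it is $f$ that gets matched and then deletes the still-free incident edge $d_t$, yielding $d_t\in S_f$. This transposition is a measure-preserving involution on permutations carrying the first event onto the second, which gives the equality. I expect this bijection to be the main obstacle: one must verify that reassigning just these two priorities genuinely interchanges the roles of $d_t$ and $f$ in the greedy run without disturbing the sample-space membership being counted.

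Finally I would assemble the pieces. The $f=d_t$ term contributes $\Pr[d_t\text{ matched}]$, and after applying the swap the remaining terms become $\sum_{f\in\{d_{t+1},\dots,d_N\}}\Pr[f\text{ matched and }d_t\in S_f]$. Because in any run $d_t$ is deleted exactly once---either by matching itself or by the unique matched edge $p(d_t)$---the events $\{f\text{ matched},\,d_t\in S_f\}$ are disjoint, and this last sum equals $\Pr[d_t\text{ unmatched and }p(d_t)\text{ deleted at a time}\ge t]$, which is exactly $\Pr[d_t\text{ unmatched and early}]$. Hence $\E[|U(d_t)|\,\mathbf{1}[d_t\text{ matched}]] = \Pr[d_t\text{ matched}] + \Pr[d_t\text{ unmatched and early}]$, and substituting back gives $\E[\Phi(d_t)\,\mathbf{1}[\text{early}]] = \Pr[d_t\text{ matched}] + 2\,\Pr[d_t\text{ unmatched and early}] \le 2\,\Pr[\text{early}]$. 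Dividing by $\Pr[\text{early}]$ yields $\E[\Phi(d_t)\mid\text{early}]\le 2$, with the bound approached only when $d_t$ is essentially never matched.
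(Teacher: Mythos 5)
Your proof is correct, but it reaches the bound by a genuinely different route than the paper. The paper's proof conditions on fine-grained information: it partitions the set $\Pi$ of permutations making $d_t$ early into equivalence classes by the greedy step $r$ at which $d_t$ is removed together with the first $r-1$ entries of the permutation; within a class the surviving incident set $P$ is fixed, the edge drawn at priority $r$ is uniform over $P \cap U$, so $d_t$ is matched with probability exactly $1/|P \cap U|$ against a matched payment of $|P\cap U|$, and the bound follows classwise. You never condition; instead you expand $\E[\Phi(d_t)\,\mathbf{1}[\mathrm{early}]]$ into indicators and prove the pairwise exchange identity $\Pr[d_t \mbox{ matched},\, f \in S_{d_t}] = \Pr[f \mbox{ matched},\, d_t \in S_f]$ via a priority transposition, which is sound: swapping the priorities of $d_t$ and $f$ (both at least $\pi(d_t)$) leaves the greedy run unchanged before step $\pi(d_t)$, both edges are free in that common state, so the involution bijects the two events; then, since sample spaces partition the edges, the events over distinct $f$ are disjoint and their union is exactly ``$d_t$ unmatched and early,'' yielding the exact identity $\E[\Phi(d_t)\,\mathbf{1}[\mathrm{early}]] = \Pr[\mathrm{matched}] + 2\Pr[\mathrm{unmatched\ and\ early}] \le 2\Pr[\mathrm{early}]$. (Your expansion of $U(d_t)$ as $S_{d_t} \cap \{d_t,\dots,d_N\}$ uses the assumption that every edge is eventually deleted, but the paper makes the same empty-graph assumption in a footnote, so this is fine.) Both arguments exploit the same underlying exchangeability---among incident edges surviving the greedy prefix and still undeleted by the user, each is equally likely to be the match---but they package it differently, and each buys something: your version gives an exact equality pinpointing when the bound is tight ($\E[\Phi(d_t)\mid \mathrm{early}] = 2 - \Pr[\mathrm{matched}]/\Pr[\mathrm{early}]$), and your swap lemma is the kind of local coupling that transfers readily to other random-greedy analyses; the paper's version gives the pointwise-stronger per-class statement $\E[\Phi(d_t)\mid \pi \in B] \le 2$, which makes especially transparent the remark following the lemma that the bound is unaffected by dependence on earlier payments. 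Your approach also neatly sidesteps the subtlety the paper flags---that the sample space itself is random---by noting the transposition fixes the entire prefix state, exactly where the paper instead freezes the prefix by conditioning.
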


\begin{proof}
Let $\Pi$ be the set of permutations (that we would input into
\ccode{sequentialGreedyMatch}) such that $d_t$ is an early
delete. Then $\Pi$ is the set of permutations such that $p(d_t) \in
U(d_t)$ (henceforth we will use $U$ for $U(d_t)$).  Let $r$ be the time
step in \ccode{sequentialGreedyMatch} at which $d_t$ is marked to be
not free (the time step where $d_t$ is effectively removed from the
graph).\footnote{Note that we have two distinct uses of time: the time
in the user delete sequence, and the time in the run of
\ccode{sequentialGreedyMatch}. Note that the
\ccode{sequentialGreedyMatch} time happens completely before the user
delete sequence time. Note that these times are separate from each
other: if $t$ is large, it is possible that $r$ is large or small.}
Let $P$ be the set of edges incident on $d_t$ at step $r$ (the
remaining edges after permutation deletes). Group $\Pi$ into
equivalence classes based on, for $\pi \in \Pi$, the value of $[r,
  (\pi_i)_{i=1}^{r-1}]$. Note that $P$ is determined by $[r,
  (\pi_i)_{i=1}^{r-1}]$. Let $\pi \in \Pi$, and let $B$ be the
equivalence class containing $\pi$. 

We consider the possible values of $\pi^{-1}(r)$. Since $d_t$ must
delete in step $r$, $\pi^{-1}(r)$ must be in $P$. Note that
$\pi^{-1}(r) =p(d_t) \in U$. Since there are no other restrictions, we
have that $\pi^{-1}(r) \in P \cap U$.  Thus $|B|=|P \cap U| (m-r)!$.
Next, we consider the value of $\Phi(d_t)$ if $d_t$ is matched. At
user time 0, $d_t$ had price $|P|$. However, each unmatched edge in
$d_t$'s sample space that the user visits reduces $d_t$'s price by
1. Thus, the price of $d_t$ is $|P|-|P \cap \overline{U}|=|P \cap
U|$. There are $(m-r)!$ permutations in $B$ where $d_t$ matches at
time step $r$.

Drawing a permutation uniformly at random from $B$, we
get \begin{multline*} \E[\Phi(d_t)]=(1-\frac{1}{|P \cap U|})1 + \frac{1}{|P \cap U|} |P \cap U|
   \le 2. \end{multline*} Since $B$ was an
arbitrary equivalence class we can conclude that $\E[\phi(d_t)] \le
\max_{B} \E[\phi(d_t) | \pi \in B] \le 2$.
\end{proof}

Note that Lemma \ref{Lemma:3.4} is true for all times $t$; this means that, regardless of the deletes the user chooses, the next delete has constant expected payment. Even though the payment given by later deletes depends on the payment given by earlier deletes, the expectation remains no more than 2. We also note that, when a graph is fully deleted, the sum of the payments of early deletes is the number of edges in the graph. 

Let $\Phi'(d_t)=\begin{cases} \Phi(d_t) & d_t \textrm{ is early} \\ 0 & \textrm{otherwise} \end{cases}$. 
\begin{lemma}\label{Lemma:3.5} The early deletes on the sample space of deleted matched edge $e$ contribute $|S_e|$ total payment. Furthermore, $\sum_{i=1}^m \Phi'(d_t)=m$. \end{lemma}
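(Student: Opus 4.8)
The plan is to prove the two statements by a local accounting argument on each sample space and then aggregate using the fact that the sample spaces partition the edge set.

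First I would fix an arbitrary matched edge $e$ and track all payments charged against its price $|S_e|$. The key observation is that for any edge $d$ with $p(d) = e$ (i.e.\ $d \in S_e$), the deletion of $d$ is early exactly when it occurs no later than that of $e$, and late otherwise; in particular, once $e$ itself has been deleted, every remaining edge of $S_e$ is late and pays nothing. I would then split the early deletions on $S_e$ into (i) the unmatched edges of $S_e$ deleted strictly before $e$, and (ii) the deletion of $e$ itself. Writing $k$ for the number of edges in group (i), each such deletion pays $1$ and reduces $e$'s price by $1$, contributing $k$ in total, so that when $e$ is finally deleted exactly $k$ edges of its sample space have been visited. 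By the charging rule, the deletion of $e$ then pays its remaining price $|S_e| - k$. Summing, the total payment charged to $S_e$ is $k + (|S_e| - k) = |S_e|$, which establishes the first claim.

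Next I would handle the global sum. Each deletion $d_t$ contributes its payment $\Phi'(d_t)$ to exactly one sample space, namely $S_{p(d_t)}$: an early unmatched deletion contributes its single unit toward reducing $p(d_t)$'s price, a matched deletion $d_t = e$ contributes toward $S_e$, and every late deletion contributes $0$. Hence $\sum_{t=1}^m \Phi'(d_t)$ decomposes as a sum over matched edges of the per-sample-space totals computed above. Since the graph is fully deleted and the sample spaces of the matched edges partition all $m$ original edges (each edge being deleted exactly once by \ccode{sequentialGreedyMatch}, as noted earlier), we obtain $\sum_{t=1}^m \Phi'(d_t) = \sum_{\textrm{matched } e} |S_e| = m$.

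The argument is essentially bookkeeping, so the main thing to be careful about is that the accounting be \emph{exact} rather than merely an upper bound: I would need to verify that every unit of each price $|S_e|$ is paid exactly once, that the ``$k$ edges visited'' appearing in the matched-edge rule coincides precisely with the $k$ units already paid by group (i), and that late deletions never need to pay because the full price $|S_e|$ has already been collected by the time $e$ is deleted. I expect the only genuinely delicate point is confirming that $e$'s own deletion is always early (so that the residual $|S_e| - k$ is actually collected), which follows from $e = p(e)$ together with the convention that a deletion $d_t = p(d_t)$ counts as early.
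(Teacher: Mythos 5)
Your proof is correct and matches the paper's own argument: the paper likewise observes that if $e$ is deleted with price $|S_e|-j$, then $j$ early unmatched deletes in $S_e$ each paid $1$, so each sample space collects exactly $|S_e|$, and summing over the partition of the $m$ edges into sample spaces gives $m$. Your version merely spells out the bookkeeping (including that $e$'s own deletion is early by the convention $d_t = p(d_t)$) that the paper leaves implicit.
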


\begin{proof} 
Observe that if a matched edge $e$ has price $|S_e|-j$ on deletion, then $j$ unmatched edges in $S_e$ were deleted, and the updates deleting these $j$ unmatched edges paid $j$. Thus the early deletes on the sample space of $e$ contribute $|S_e|$ total payment. Thus, all of these early deletes pay $m$ in total. \end{proof}

\newcommand{\gap}{2}

\begin{figure*}
  \small
    \begin{minipage}[t]{.47\textwidth}
    \begin{lstlisting}
isHeavy$(e)$ : return $|C(e)| \geq 4r^2 \gap^{l(e)}$
      
@\cinput{an edge $m$ along with a set of incident edges $S_e$}@
@\ceffect{$m$ is added as a match with sample edges $S_e$}@
addMatch$(m, S_e)$ :
  $\mbox{insert}(M,m)$
  parfor $e \in S_e$ :
    type$(e) \assign$sampled
    $p(e) \assign m$
  type$(m) \assign$matched
  parfor $v \in V(m)$ : $p(v) \assign m$
  $S(m) \assign S_e$
  $C(m) \assign \emptyset$

@\cinput{a matched edge $m$}@
@\cassume{sample edges are converted to cross edges}@
@\ceffect{removes $m$'s owned edges and frees vertices}@
@\coutput{the cross edges owned by $m$}@
removeMatch$(m)$ :
  $\mbox{delete}(M,m)$
  $E \assign C(m)$
  parfor $v \in V(m)$ : $p(v) \assign \bot$
  parfor $e \in E$ : removeCrossEdge$(e)$
  return $E$

@\cinput{an edge $e$ that is not in the structure}@
@\ceffect{add $e$ to the structure as a cross edge}@
addCrossEdge$(e)$ :
  type$(e) \assign$cross  
  $v_{max} \assign argmax_{v \in V(e)} l(p(v))$
  $e' \assign p(v_{max})$
  insert$(C(e'), e)$
  parfor $v \in V(e)$ : insert$(P(v, l(e')), e)$ 

@\cinput{a cross edge $e$}@
@\ceffect{removes $e$ from the structure}@  
removeCrossEdge$(e)$ :
  $\mbox{delete}(C(p(e)),e)$
  parfor $v \in V(e)$ : delete$(P(v, l(p(e))), e)$
  type$(e) \assign$unsettled

@\cinput{a set $E$ of newly matched edges}@
@\ceffect{adjusts cross edge incident on $E$ to their new level}@
adjustCrossEdges$(E)$ :
  $V \assign \bigcup_{e \in E} V(e)$
  $C \assign \bigcup_{v \in V} \bigcup_{i \in [0,l(p(v)))} P(v,i)$
  parfor $e \in C$ : removeCrossEdge$(e)$
  parfor $e \in C$ : addCrossEdge$(e)$
  
     \end{lstlisting}
   \end{minipage}  \hspace{.2in}
   \begin{minipage}[t]{.47\textwidth}
     \begin{lstlisting}
@\cinput{the cross edges owned by removed heavy matches}@
@\ceffect{remove and rematch possibly breaking matches}@
@\coutput{the cross edges owned by new removed matches}@
randomSettle$(E)$ :
  $G' \assign (V(E),E)$
  $X \assign$parallelGreedyMatch$(G')$
  stolen $\assign \{p(v) : (m,\_) \in X\; |\; \exists_{v \in V(m)} (p(v) \neq \bot)\}$
  parfor $(e, S_e) \in X$ :  addMatch$(e, S_e)$
  adjustCrossEdges$(\{m : (m,\_) \in X\})$
  bloated $\assign \{m : (m,\_) \in X\; | \; \mbox{isHeavy}(m)\}$
  return deleteMatchedEdges$(\mbox{bloated} \cup \mbox{stolen})$

@\cinput{a set of matched edges}@
@\ceffect{remove matches, rematch the light matched edges}@
@\coutput{the cross edges owned by the heavy matches}@
deleteMatchedEdges$(E)$ :
  parfor $e \in \bigcup_{m \in E} S(m)$ : addCrossEdge$(e)$
  heavy $\assign\{m \in E\; |\; \mbox{isHeavy}(m)\}$
  light $\assign E \setminus \mbox{heavy}$
  insertEdges$(\bigcup_{m \in \mbox{light}} \mbox{removeMatch}(m))$
  return $\bigcup_{m \in \mbox{heavy}} \mbox{removeMatch}(m)$

@\cuser{insert a batch of edges}@
insertEdges$(E)$ :
  free $\assign \{e \in E\; |\; \forall_{v \in e} (p(v) = \bot)\}$
  $G' \assign (V(free),free)$
  $Y \assign \{e : (e, \_) \in \mbox{parallelGreedyMatch(G')}\}$
  parfor $e \in Y$ : addMatch$(e, \{e\})$
  parfor $e \in E \setminus Y$ : addCrossEdge$(e)$
 
@\cuser{delete a batch of edges}@    
deleteEdges$(E)$ :
  matched $\assign\{e \in E \; | \; \mbox{type}(e) = \mbox{matched}\}$
  parfor $e \in E \setminus \mbox{matched}$ :
    if $\mbox{type}(e) = \mbox{cross}$ : removeCrossEdge$(e)$
    else : $\mbox{delete}(S(p(e)), e)$
  parfor $m \in \mbox{matched}$ : delete$(S(m), m)$
  $E' \assign$deleteMatchedEdges(matched)
  sampledEdges $\assign 0$
  while $(2 |E'| > \mbox{sampledEdges})$ :
    sampledEdges $\assign \mbox{sampledEdges} + |E'|$
    $E' \assign$randomSettle$(E')$
  insertEdges$(E')$
    \end{lstlisting}
   \end{minipage}
   \caption{Parallel batch-dynamic algorithm for maximal matching with $O(r^3)$ expected amortized cost
     per edge update.}
   \label{fig:parallelhyperedge}
\end{figure*}

\begin{table}
  \begin{tabular}{|c|l|}
    \hline
    \ccode{type}$(e)$ & one of \it $\{$matched, sample, cross, unsettled$\}$\\
    $p(e)$ & owner of edge $e$\\
    $V(e)$ & vertices of edge $e$\\
    $S(m)$ & sample edges owned by matched edge $m$\\
    $C(m)$ & cross edges owned by matched edge $m$\\
    $l(m) = \lfloor \lg |S(m)| \rfloor$ & level of matched edge $m$\\
    $p(v)$ & matched edge incident on $v$, or $\bot$ if none\\
    $P(v,l)$ & $\{e \in \mbox{cross edges}\; |\; p(e) = l \wedge v \in e\}$\\
     \hline
  \end{tabular}
  \caption{The notation used in the algorithm.}
  \label{table:symbols}
\end{table}

\begin{figure}
\includegraphics[width=\linewidth]{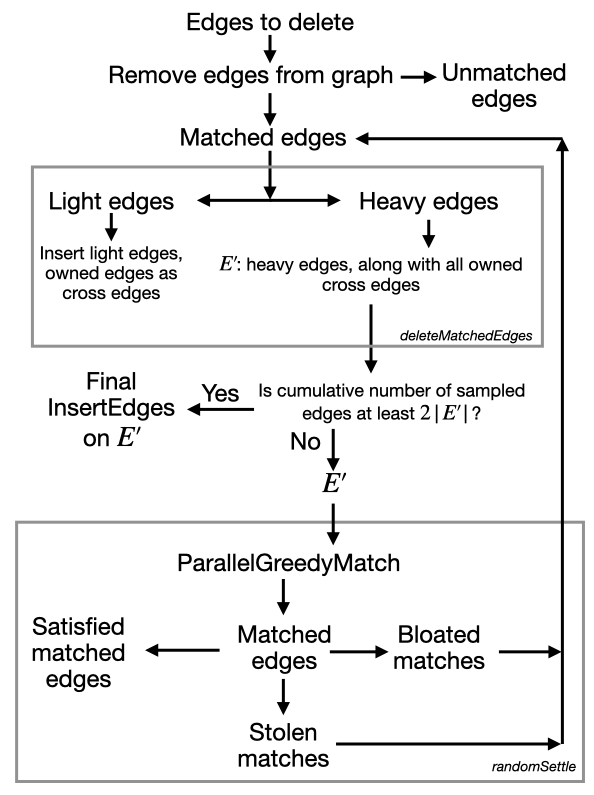}
\caption{Flow chart representation of \ccode{deleteEdges}. After edges are removed from the graph, they are separated into matched and unmatched. The matched edges are passed into \ccode{deleteMatchedEdges}, where they are separated into light and heavy. The light matches (and their cross edges) are reinserted into the graph, while the heavy matches and their cross edges ($E'$) are sent to \ccode{randomSettle}. A static maximal matching is run on $E'$, and the matches are separated into satisfied (no further action needed), stolen, and bloated. We repeat the process by sending the stolen and bloated edges to $\ccode{deleteMatchedEdges}$. We repeat until we have sampled sufficiently many edges.}
\end{figure}

\section{Algorithm}

We now describe our parallel batch-dynamic algorithm.
Roughly, our algorithm
follows the structure of prior level-based dynamic maximal matching
algorithms~\cite{BGS11,Sol16,AS21,GT24}.  In these algorithms, and
ours, inserting edges and deleting unmatched edges is reasonably
straightforward, and the difficulty comes in deleting matched edges.
When deleting a matched edge we need to find if any of the incident
edges can be matched.  If the degree of the edge is high this can be
costly.  The idea of the prior algorithms is to randomly select a
possible edge to match across a sufficiently large sample (roughly its
degree).  The oblivious adversary therefore does not know which of the
sampled edges is selected and in expectation will delete half the
edges before reaching the one that was selected.  When a matched edge
is encountered, the algorithms, roughly speaking, need to
explore neighboring edges, but this cost is amortized against the
prior cheap deletes of unmatched edges.

The level-based algorithms all use some leveling scheme that consists
of a logarithmic number of levels and places every matched edge at a
level that is appropriate for its sample size (and degree).  The algorithms
then ensured that the cost of finding matches in incident edges is
either proportional (within factors of $r$) to the sample size for
that level, or it is possible to re-sample among a larger set and
charge the cost to the new sample. Finding an edge to add to a 
matching is referred to as \emph{settling}.

We now describe how our leveling scheme works and outline the algorithm
and analysis.  We will describe the analysis in detail in Section~\ref{sec:analysis}.
Our leveling structure is defined as follows.

\begin{definition}
  A \emph{leveled matching structure} 
for an undirected hypergraph $G = (V,E)$ and matching $M \subseteq E$
maintains the following invariants:
\label{ref:leveled}
\begin{enumerate}[leftmargin=*]
\item All edges of $E$ are either \emph{cross edges} or \emph{sampled edges}. $M$ is a subset of the sampled edges.
  \label{invariant1}
\item Every edge is owned by an incident matched edge (a matched edge
  owns itself).
  \label{invariant2}
\item When an edge $e$ becomes matched, let $s$ be the number of sample edges $e$ owns. We assign $e$ the level $l = \lfloor \lg s \rfloor$.
  \label{invariant3}
\item The owner $m$ of any cross edge $e$ must be on the maximum level of any matched
  edges incident on $e$.
  \label{invariant4}
\end{enumerate}
\end{definition}
The second invariant implies the matching is maximal.  Throughout we use $\lg n$ to mean $\log_2 n$.

The invariants above are maintained between every batch operation.
During the operation itself, some can be violated.  In particular,
updates will involve ``unsettling'' edges to be resettled.  

To efficiently access the leveled matching structure we will use the
following data structures.
\begin{itemize}[leftmargin=*]
\item
  We maintain the set of matched edges $M$.
\item
  For every edge $e$ we maintain its vertices ($V(e)$), the type of the
  edge (\ccode{type}$(e) \in \{$\textit{matched, sample, cross,
    unsettled}$\}$), and its owner ($p(e)$).
\item
  For every matched edge $m$ we further maintain a set of its sampled
  edges ($S(m)$) and a set of its owned cross edges ($C(m)$).

\item
  For every vertex $v$ we maintain $p(v)$ that specifies the matched
  edge that covers it, if there is one, or $\bot$ if there is none.
  We also maintain a mapping ($P(v,l)$) from level $l$ to a set of
  cross edges.  The set will contain all cross edges $e$ at level $l =
  l(e)$ that are incident on $v$. 
\end{itemize}
Table~\ref{table:symbols} summarizes the data structures and other
notation.  Figure~\ref{fig:parallelhyperedge} lists the pseudocode for
our algorithm.  The user-level functions are \ccode{insertEdges} and
\ccode{deleteEdges}, which insert and delete, respectively, batches of
edges.  Four of the functions are just for adding and removing
matched and cross edges.  They just update the relevant data
structures to maintain the invariants.  When adding a match we also
add all its sample edges.  When removing a match, we release its cross
edges and mark its vertices as free.   Removing a match assumes the
sample edges have already been converted to cross edges.

Note that we do not initialize $P(v,l)$ to empty for every level at the beginning of the algorithm (which would lead to $\Theta(n \log n)$ work as there are $O(\log n)$ levels). Instead, we store the id of each initialized bag in a hash table. Before accessing a bag, we check whether it has been initialized, and initialize it if it has not yet been initialized. 

Before describing how we handle batch insertions and deletions, we remark that our algorithm is quite different from Ghaffari and Trygub \cite{GT24}. Some of the differences include:
\begin{enumerate}[leftmargin=*]
\item Our leveling structure has a gap of 2, rather than $\Theta(r)$.
\item Our leveling structure permits matched edges that gain many cross edges to remain at a low level, instead of forcibly resettling at a higher level. 
\item In our algorithm, edges take on all ownership responsibilities: (matched) edges, not vertices, own edges.  
\item We process matched deletions at all levels at once, instead of correcting levels from high to low. 
\item We use random static maximal matching to randomly select matched edges, instead of iteratively choosing independent sets of edges to match via coin flips with increasing probability of success.
\end{enumerate}

Our algorithm most resembles Assadi and Solomon's hyperedge algorithm~\cite{AS21},
although, beyond being parallel, it differs in important ways. For an extended discussion of the differences between our algorithm and prior work, see Section~\ref{sec:compare}.

Inserting a batch of edges first identifies which of the edges are
free (are not incident on a matched edge) and runs a maximal matching
on these.  This could be any maximal matching.  It then adds the
matched edges to the leveled matching structure, each with just
themselves as the sample.  This places them on level $0$ since $\lg 1
= 0$.  Any remaining edges are added as cross edges.

Deleting a batch of edges has to separate the edges that are matched
from those that are not.  For those that are not we need to update the
data structures to remove them.  These unmatched edges could either be
sample edges or cross edges, but both types are easy to remove.  The
bulk of the algorithm is for processing the deleted matched edges.  To
handle matched edges we first remove them from their
sample space because they no longer exist in the graph and should not
be reinserted. Then we call \ccode{deleteMatchedEdges}, which is the
workhorse of the algorithm.

The function \ccode{deleteMatchedEdges} first takes all the sample
edges owned by the matched edges to be removed and inserts them as
cross edges.  We note that when inserting them as cross edges they
might or might not be owned by the same matched edge.  The algorithm
then separates the matched edges into two kinds: light and heavy.\footnote{Note that we handle light and heavy edges differently for efficiency (our work bounds), not for correctness. For example, if we designated all edges to be light, the matching would still be maximal. }  
The
light matches are ones for which the number of cross edges it owns is
not much larger than its sample size, and in particular for a match
$e$, less than $4r^22^{l(e)}$.  We allow ourselves to charge work
proportional to the sample size to each match, and hence can afford to
rematch these owned edges directly.  
The algorithm therefore simply
removes the light matched edges using \ccode{removeMatch}, which also
removes all their owned cross edges.   
The algorithm then reinserts the cross edges
with \ccode{insertEdges}.\footnote{In Solomon's algorithm this is
  referred to as deterministic settling.}  
  The \ccode{insertEdges} function
will find new matches for the removed cross edges, if possible.

We are now left with deleting the heavy matches.  We cannot afford to
find matches for the cross edges owned by the heavy matched edges without
taking new samples, which is performed by random settling.  Before
random settling, we remove from the structure all the heavy matches
along with their owned cross edges (using \ccode{removeMatch}).
Random settling is then performed in rounds.  One round, i.e., one
call to \ccode{randomSettle}, first takes all cross edges owned by
the heavy matches and runs a random greedy matching (\ccode{ParallelGreedyMatch}) on them.  
The
output of this matching will be a maximal set of matched edges, along
with a sample set for each.  Each matched edges $e$ is added to the
data structure with their sample $S_e$ at the appropriate level
($\lfloor \lg |S_e| \rfloor$) using \ccode{addMatch}. 

At this point there are a couple of invariants the random settle has
broken that need to be fixed.  Firstly, the levels of matches have
changed so some of the cross edges incident on the new matches may no
longer satisfy Invariant~\ref{ref:leveled}.\ref{invariant4}---i.e.
they might be owned by a match that is not at the maximum level of
their incident matched edges.  Secondly, since we selected new matches
among all edges owned by a match (not just the free ones), the new
matches might be incident on existing matches.  This would violate the
matching property.  We call these existing matches \emph{stolen}
matches since the algorithm will need to delete them.

Random settle fixes the cross edges by calling
\ccode{adjustCrossEdges}.  This removes all cross edges that might
change and then puts them back in with ownership on the maximum
incident level.  This makes use of the $P(v,l)$ data structure.  In
particular, for all the vertices that belong to the new matches it
collects all the cross edges up to one less than the level of the new
match.  These are cross edges that must be owned by the new match to
satisfy Invariant~\ref{ref:leveled}.\ref{invariant4}.  The cost to
transferring ownership can be charged to the new match the ownership
is assigned to.  If this cost charged to a match is significantly
larger than the sample size of the match, however, the match cannot
afford the charge.  In particular this is the case if the match $m$
now owns at least than $4r^22^{l(e)}$.  We refer to these as the
\emph{bloated} matches, and these will be deleted and their owned
edges random settled in the next round.

At the end of a round of random settling we therefore have two sets of
matched edges to be deleted: the bloated and the stolen edges.  We
delete these edges using \ccode{deleteMatchedEdges}.  As already
described, this will partition them into light and heavy.\footnote{The
bloated edges will always be heavy}  The light ones can be directly
removed and their owned edges reinserted with \ccode{insertEdges}.  The
heavy ones are removed and their owned edges are returned to be
matched in the next round of random settle.

The rounds of random settling continue until either a round is empty,
or two times the number of edges to process in a round is less than
the total number of edges processed in prior rounds.  In the second
case, we insert the remaining edges using \ccode{insertEdges}, and the
cost is charged against the prior work.  Because of the termination
condition, each round will double in size. Since the maximum size of a 
round is $m$, the total number of edges in the graph, there can be
at most $O(\log m)$ rounds.

\section{Analysis}
\label{sec:analysis}

We now analyze the cost bounds.  We consider both work and \depth{},
although the \depth{} analysis is straightforward and left until the end.
As in prior dynamic maximal matching
algorithms~\cite{BGS11,Sol16,AS21,GT24}, it is useful to consider the
lifetime of a match from when it is created until it is deleted.  This
is referred to as an \emph{epoch}.  It is also useful to distinguish
epochs that are deleted by the user in the call to
\ccode{deleteEdges}, vs. ones that are deleted internally by the
algorithm by a random settle.  The first kind are referred to as
\emph{natural deletions} of an epoch and the second as \emph{induced
deletions} of an epoch.  In the following we will equate the epoch $e$
with the edge $e$ corresponding to its match.

Roughly speaking, the work of the natural deletions of epochs can be
covered by the fact that on average about half the sampled edges will
be deleted by the user before a matched edge is selected.\footnote{We
intentionally do not say ``in expectation'' here, since the argument
is more subtle.}  However, the induced deletions of epochs
cannot be charged in this way since these epochs are deleted
prematurely, possibly immediately after they are created.  We
therefore charge induced deletions to the creation of new epochs, and
ultimately to a natural deletion.  We will therefore consider two work
costs to associate with epochs.  The first is the \emph{direct work}
which is associated with every epoch $e$ and covers all the
work for deleting matched edges.  Furthermore all work of each batch
operation on deleting matched edges can be associated with epochs
created and deleted during that operation.  The second is the
\emph{total work}, which is only associated with natural
epochs, and charges all the direct work of induced deletions to the
natural deletions.  We will show that the expected direct work for each epoch $e$ (natural or induced) 
is $O(r^3 2^{l(e)})$ and the expected total work for each natural epoch $e$ is $O(r^3 2^{l(e)})$. 

For each step in the algorithm, our charging scheme deterministically charges it to an epoch (or set of epochs). However the amount of work charged to an epoch is in expectation because the subroutines of the algorithm (static maximal matching, groupBy) themselves are randomized and have expected work bounds. Thus, many of the lemmas that follow have work bounds in expectation.

\subsection{Direct Work: Charging to all Epochs}

We start by analyzing the direct work that we charge to each epoch.
Here we only consider the work when the user deletes matched edges.
The work for adding edges and deleting unmatched edges is just $O(r)$
per edge and is left to the end.

We partition the direct work into three components: the work
associated with the light edges when deleting matched edges (the
\emph{light work}), the work associated with random settling (the
\emph{heavy work}), and the work for the final \ccode{insertEdges}
(the \emph{final work}).  The light work is charged to the deletion of
epochs while the heavy and final work are charge to the creation of
epochs.  Specifically, the light work includes: all work in
\ccode{insertEdges} when called from \ccode{deleteMatchedEdges}, as
well as the work of converting sample edges to cross edges, and all
work on light edges in \ccode{deleteMatchedEdges}.  The final work
just includes the final call to \ccode{insertEdges} in
\ccode{deleteEdges}.  All other work is heavy work.  This includes the
work in random settle and the work in \ccode{deleteMatchedEdges}
associated with heavy edges.  We first analyze \ccode{insertEdges}
since it is part of the light work.

\begin{lemma}
  \label{lemma:insert}
  The expected work of \ccode{insertEdges}$(E)$ is $O(r |E|)$.
\end{lemma}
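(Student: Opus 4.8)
The plan is to bound the work of \ccode{insertEdges}$(E)$ by accounting separately for each operation it performs, showing that every one costs $O(r)$ per edge in $E$. Looking at the pseudocode, \ccode{insertEdges}$(E)$ does four things: (1) it filters $E$ to find the free edges, (2) it runs \ccode{parallelGreedyMatch} on the subgraph induced by the free edges, (3) it adds the matched edges via \ccode{addMatch}, and (4) it adds the remaining (non-matched) edges via \ccode{addCrossEdge}. I would argue each contributes $O(r|E|)$ expected work.

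\emph{First I would handle the filtering step.} Computing the set \code{free} requires checking, for each edge $e \in E$ and each vertex $v \in e$, whether $p(v) = \bot$. Since each edge has at most $r$ vertices, this is $O(r|E|)$ work, and building $G' = (V(\text{free}), \text{free})$ has total cardinality at most $r|E|$. \emph{Next I would invoke} Theorem~\ref{lemma:greedymatchcost}: \ccode{parallelGreedyMatch} on a graph of total cardinality $m'$ runs in $O(m')$ expected work. Since the total cardinality of $G'$ is at most $\sum_{e \in \text{free}} |e| \le r|E|$, this step costs $O(r|E|)$ in expectation.

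\emph{Then I would bound the two insertion loops.} Each call to \ccode{addMatch}$(e,\{e\})$ performs a constant number of set insertions plus a \texttt{parfor} over the vertices $V(e)$ and the singleton sample, so it costs $O(r)$; summed over the matched edges $Y \subseteq E$ this is $O(r|E|)$. Each call to \ccode{addCrossEdge}$(e)$ computes $\argmax_{v \in V(e)} l(p(v))$ in $O(r)$ time and then performs $O(r)$ insertions into $C(e')$ and the $P(v, l(e'))$ structures, again $O(r)$ per edge, for $O(r|E|)$ total over $E \setminus Y$. Summing the four contributions gives the claimed $O(r|E|)$ expected work. The dominant subtlety, and the one place the bound is genuinely in expectation rather than worst case, is the call to \ccode{parallelGreedyMatch}; the remaining steps are deterministic $O(r)$-per-edge accounting, with the only technicality being that the parallel insertions into shared sets $C(\cdot)$ and $P(\cdot,\cdot)$ are implemented via the groupBy/sumBy machinery described in the preliminaries, which preserves the linear (amortized, expected) work bound.
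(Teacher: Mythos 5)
Your proof is correct and takes essentially the same approach as the paper's, which likewise bounds the call to \ccode{parallelGreedyMatch} by $O(\sum_{e \in E} |e|) \le O(r|E|)$ expected work and charges $O(r)$ to each of the $O(|E|)$ free-checks, \ccode{addMatch}, and \ccode{addCrossEdge} operations. Your write-up merely makes explicit some details (the total cardinality of $G'$, and the groupBy implementation of the concurrent set insertions) that the paper delegates to its preliminaries.
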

\begin{proof}
  The random greedy match does $O(\sum_{e \in E} |e|) \le O(r|E|)$
  expected work.  Determining if a vertex is free, adding a match, and adding a
  cross edge each require $O(r)$ work, and there are $O(|E|)$ such
  operations for a total of another $O(r|E|)$ work.
\end{proof}

\begin{lemma}
  \label{lemma:light}
  The \emph{light work} can be covered by charging $O(r^3 \gap^{l(e)})$ in expectation to
  each deleted match $m \in \ccode{light}$ in \ccode{deleteMatchedEdges}.
\end{lemma}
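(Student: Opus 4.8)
The plan is to read the three constituents of the light work directly off the code of \ccode{deleteMatchedEdges} and bound each against the budget $O(r^3 \gap^{l(m)})$, charging it to the appropriate light match. The three pieces are: (i) the conversion loop that calls \ccode{addCrossEdge} on every edge in $\bigcup_{m \in E} S(m)$; (ii) the calls \ccode{removeMatch}$(m)$ for the light matches; and (iii) the single aggregate call \ccode{insertEdges}$(\bigcup_{m \in \mbox{\footnotesize light}} \ccode{removeMatch}(m))$, which rematches the released cross edges. Lemma~\ref{lemma:insert} will be used as a black box for piece (iii).

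Before bounding these I would establish two structural facts that make everything fit. First, a match's sample set only shrinks after creation: $S(m)$ is set once in \ccode{addMatch} and is thereafter only reduced by the \ccode{delete} calls in \ccode{deleteEdges}. Hence the current $|S(m)|$ is at most its value when $m$ settled, which by Invariant~\ref{ref:leveled}.\ref{invariant3} satisfies $|S(m)| < 2^{l(m)+1} = 2\gap^{l(m)}$. Second, a light match satisfies $|C(m)| < 4r^2 \gap^{l(m)}$ by the definition of \ccode{isHeavy}. With these, the bounds are mechanical. Each \ccode{addCrossEdge} does $O(r)$ work (an $\argmax$ over $\le r$ vertices plus $O(r)$ insertions), so converting $m$'s sample edges costs $O(r|S(m)|) = O(r \gap^{l(m)})$, charged to $m$. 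Each \ccode{removeMatch}$(m)$ for light $m$ releases its cross edges at $O(r)$ apiece, for $O(r|C(m)|) = O(r^3 \gap^{l(m)})$. Finally the pooled call receives $E' = \bigcup_{m \in \mbox{\footnotesize light}} C(m)$ with $|E'| = \sum_{m \in \mbox{\footnotesize light}} |C(m)| < \sum_{m \in \mbox{\footnotesize light}} 4r^2 \gap^{l(m)}$, so by Lemma~\ref{lemma:insert} its expected cost is $O(r|E'|) = O\big(\sum_{m \in \mbox{\footnotesize light}} r^3 \gap^{l(m)}\big)$. Summing the three pieces yields $O(r^3\gap^{l(m)})$ per light match in expectation.

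The steps are routine once the two structural bounds are fixed; the only point needing care is piece (iii), where \ccode{insertEdges} is one call on the pooled cross edges rather than a per-match operation. Distributing its cost, which is linear in $|E'|$ by Lemma~\ref{lemma:insert}, as $O(r^3\gap^{l(m)})$ to each light $m$ is valid precisely because the heaviness threshold caps each $|C(m)|$ at $4r^2\gap^{l(m)}$; the expectation enters here through the randomized \ccode{parallelGreedyMatch} and \ccode{groupBy} inside \ccode{insertEdges}, which is why the charge is in expectation. One bookkeeping note: the conversion in piece (i) also touches the sample edges of the heavy matches in $E$, but that portion is $O(r\gap^{l(m)})$ per heavy match and is charged to those matches under the heavy-work accounting (it stays within each such epoch's direct-work budget), so it does not burden the light matches. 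Thus the genuine obstacle is not any hard inequality but ensuring the accounting is airtight: that every edge is converted once at $O(r)$ cost, that \ccode{removeMatch} on a light match never exceeds the threshold, and that the aggregate rematch cost splits cleanly across the light matches.
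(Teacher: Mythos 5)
Your proposal is correct and follows essentially the same route as the paper: bound $|S(m)| < 2\gap^{l(m)}$ via Invariant~\ref{ref:leveled}.\ref{invariant3}, bound $|C(m)| < 4r^2\gap^{l(m)}$ via the \ccode{isHeavy} threshold, and charge $O(r)$ per edge for conversion, removal, and the pooled \ccode{insertEdges} call using Lemma~\ref{lemma:insert}. Your extra bookkeeping---itemizing the \ccode{removeMatch} cost explicitly and routing the conversion of heavy matches' sample edges to the heavy-work accounting---is consistent with how the paper handles those pieces (the latter is absorbed in the proof of Lemma~\ref{lemma:heavy}), so there is no substantive difference.
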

\begin{proof} The work in \ccode{deleteMatchedEdges} for converting the sample edges
  to cross edges is $O(r)$ per edge. Each level $l$ light edge
  has at most $O(2^{l(e)})$ edges in its sample, for $O(r 2^{l(e)})$ charge to each light edge.

  The input edges $E$ to \ccode{insertEdges} when called from
  \ccode{deleteMatchedEdges} are the union of the owned edges of light
  edges.  Each $e \in \mbox{light}$ contributed at most $O(r^2
  \gap^{l(e)})$ edges to $|E|$ since it was light.  By the previous lemma the expected work of \ccode{insertEdges} is $O(r|E|)$, or $O(r)$ per edge.
  Since $O(r^2 \gap^{l(e)})$ edges each charge $O(r)$ to their deleted light matched edge, a deleted light matched edge in total gets charged  $O(r^3 \gap^{l(e)})$.
\end{proof}

The random settle proceeds in rounds and charging goes across
adjacent rounds so we consider the full cost across all rounds.

\begin{lemma}
  \label{lemma:heavy}
  The \emph{heavy work} across all rounds of a batch update can be covered by
  charging $O(r^3 \gap^{l(e)})$ in expectation to each newly matched epoch on an edge
  $e$ during the batch update.
\end{lemma}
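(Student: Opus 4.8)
The plan is to prove an aggregate bound: the total heavy work across all rounds of a single batch update is $O(r^3 \sum_i |E_i'|)$, where $E_i'$ denotes the set of edges fed into the $i$-th call to \ccode{randomSettle}. The key structural fact I would use is that the new matches created in round $i$ (by \ccode{parallelGreedyMatch} inside \ccode{randomSettle}) have sample spaces that partition $E_i'$, so $\sum_{m} |S(m)| = |E_i'|$ summed over the round-$i$ matches, and hence $\sum_m |S(m)| = \sum_i |E_i'|$ as $m$ ranges over all epochs created during the batch's random settles. Since $\gap^{l(m)} \le |S(m)| < \gap^{l(m)+1}$, an aggregate bound of $O(r^3 \sum_i |E_i'|) = O(r^3 \sum_m \gap^{l(m)})$ can be distributed as the desired $O(r^3 \gap^{l(m)})$ per newly matched epoch $m$. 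So everything reduces to proving the aggregate bound.

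I would first dispatch the cheap components of each round. Building $G'$, running \ccode{parallelGreedyMatch}, and all the \ccode{addMatch} calls each cost $O(r|E_i'|)$ in expectation (the greedy match by Theorem~\ref{lemma:greedymatchcost}, and \ccode{addMatch} because the sample spaces sum to $|E_i'|$ plus $O(r)$ per match). The \ccode{stolen} computation and the \ccode{removeMatch} calls cost $O(r)$ per touched match; since each new match can steal at most $r$ existing matches, the number of matches touched is $O(r|E_i'|)$, giving $O(r^2 |E_i'|)$. All of these are within $O(r^3|E_i'|)$ and are charged to round $i$'s own new matches.

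The crux, and the step I expect to be the main obstacle, is bounding \ccode{adjustCrossEdges} together with the conversion of heavy matches' sample edges, since a single new match can acquire an unbounded number of cross edges. I would handle this by splitting the cross edges touched in round $i$ according to their new owner. Every touched cross edge is re-assigned to a round-$i$ new match: each such edge sat at a level strictly below some incident new match, which by Invariant~\ref{ref:leveled}.\ref{invariant4} becomes its new owner. For a new match $m$ that is \emph{not} bloated, $|C(m)| < 4r^2\gap^{l(m)} \le 4r^2|S(m)|$, so these contribute $O(r \cdot r^2 |E_i'|) = O(r^3|E_i'|)$, charged to round $i$. For a \emph{bloated} match $m$ the cross edges $C(m)$ are exactly what \ccode{removeMatch} returns into $E_{i+1}'$, so $\sum_{m \text{ bloated}} |C(m)| \le |E_{i+1}'|$ (the $C(m)$ are disjoint). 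I would charge this $O(r|E_{i+1}'|)$ forward to round $i+1$'s new matches, $O(r)$ per edge they sample. The heavy sample-conversion cost is subsumed in the same way: every heavy match $m$ satisfies $|C(m)| \ge 4r^2\gap^{l(m)} \ge 2r^2|S(m)|$, so $\sum_{\text{heavy } m}|S(m)| = O(|E_{i+1}'|/r^2)$ and its conversion costs $O(|E_{i+1}'|/r)$, again charged forward.

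Finally I would collect the telescoping charges. Each round $i$ is charged $O(r^3|E_i'|)$ from itself and $O(r|E_i'|)$ forwarded from round $i-1$, so every new match $m$ receives $O(r^3\gap^{l(m)})$ total. The only loose end is the forward charge from the last random-settle round $k$, whose target $E_{k+1}'$ is handed to the final \ccode{insertEdges} rather than to a round $k{+}1$; here the loop's termination condition gives $|E_{k+1}'| \le \tfrac12\,\text{sampledEdges} = \tfrac12\sum_{i\le k}|E_i'|$, so this leftover $O(r|E_{k+1}'|)$ is itself $O(r\sum_i|E_i'|)$ and can be spread over all new matches as an extra $O(r\gap^{l(m)})$ apiece. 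Summing, the total heavy work is $O(r^3\sum_i|E_i'|) = O(r^3\sum_m\gap^{l(m)})$, charged as $O(r^3\gap^{l(m)})$ per newly matched epoch, as claimed.
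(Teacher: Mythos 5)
Your proof is correct and takes essentially the same route as the paper's: the same per-round decomposition in which each non-bloated new match absorbs $O(r)$ work for each of its at most $4r^2\gap^{l(m)}$ raised cross edges (the dominant $O(r^3\gap^{l(m)})$ term), while bloated/heavy costs are charged forward at $O(r)$ per edge of $E_{i+1}'$, i.e., $O(r\,\gap^{l(m)})$ per next-round epoch, distributed via $\gap^{l(m)} \le |S(m)|$. Your only genuine addition is making the last round's forward charge explicit and absorbing it via the termination condition $2|E'| \le \ccode{sampledEdges}$ (the same device as Lemma~\ref{lemma:final}), a point the paper's proof leaves implicit.
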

\begin{proof}
  The cost of each round of random settle is either covered by matches
  in the current round or matches in the next round.  Specifically,
  all but raising of matches edges that end
  up being bloated, and the cost of \ccode{deleteMatchedEdges} at the
  end, will be covered by the current round.

  The total sample size in a round (with input $E$) is $|E|$ since all edges are in a
  sample.  The random greedy match does $O(r |E|)$ expected work by
  Theorem~\ref{lemma:greedymatchcost}.  The cost of adding matches is $O(r)$
  per edge in the sample sets ($S_e$) and hence is $O(r |E|)$.  The
  work for determining which incident edges need to be deleted is also
  $O(r|E|)$.  The cost of raising matches is proportional to the
  number of cross edges that are raised.  Each matched edge $e$ that
  does not become bloated will collect at most $O(r^2 2^{l(e)})$,
  and each such edge will have cost $O(r)$ for a total of $O(r^3
  2^{l(e)})$ work associated with $e$.

  The cost for raising edges that become bloated will be charged
  to the next round.  In particular, if there are $E_{next}$ edges in
  the next round, at most $|E_{next}|$ cross edges will be raised and become bloated,
  since each will contribute to the next round.  Each such edge contributes
  $O(r)$ work for $O(r |E_{next}|)$ total.  The cost of
  \ccode{deleteMatchedEdges} is split between light and heavy work,
  as discussed before.  In particular any of the light edges are
  charged to light work.  The cost of processing heavy
  edges is at most $O(r |E_{next}|)$ in expectation since each such edge and its
  owned edges are passed to the next round.

  We also need to cover the work of the initial call to \ccode{deleteMatchedEdges}
  in  \ccode{deleteEdges}. Any work on light edges
  is charged to light work, and any work on heavy edges is covered by
  matching the edges $E$ passed to the first round of random settle.

  All charges except for raising matches that are light have cost
  of $O(r |E_i|)$ charged against some round $i$.  This is $O(r)$ per
  edge in the sample sets of new epochs, and hence associates $O(r
  2^{l(e)})$ with each new epoch on edge $e$.  The cost is
  therefore dominated by the $O(r^3 2^{l(e)})$ work associated with
  raising the matched edge $e$ when it ends up light.
\end{proof}

\begin{lemma}
  \label{lemma:final}
  The \emph{final work} can be covered by charging $O(r \gap^{l(e)})$ in expectation to each
  newly matched epoch during a batch update.
\end{lemma}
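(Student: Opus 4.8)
The plan is to identify the \emph{final work} with the cost of the single trailing call \ccode{insertEdges}$(E')$ at the bottom of \ccode{deleteEdges}, bound that cost via Lemma~\ref{lemma:insert}, and then amortize it against the epochs created during the random-settle loop by exploiting the loop's doubling termination condition.

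First I would apply Lemma~\ref{lemma:insert}: the final call \ccode{insertEdges}$(E')$ does $O(r|E'|)$ expected work. The crucial structural fact is the loop guard. The \ccode{while} loop runs exactly while $2|E'| > \ccode{sampledEdges}$, so at the moment it exits and the final \ccode{insertEdges} is invoked we have, deterministically, $2|E'| \le \ccode{sampledEdges}$, i.e.\ $|E'| \le \ccode{sampledEdges}/2$. Hence the final work is $O(r\cdot\ccode{sampledEdges})$.

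Next I would interpret \ccode{sampledEdges}. The counter accumulates the input size $|E'|$ of each round before calling \ccode{randomSettle}, and inside \ccode{randomSettle} the call to \ccode{parallelGreedyMatch} partitions its entire input into the sample spaces of the matches it creates (every edge is deleted exactly once). Thus the input size of each round equals the total sample size of the new epochs created in that round, and summing over rounds gives $\ccode{sampledEdges} = \sum_e |S(e)|$, where $e$ ranges over epochs created inside the random-settle rounds of this batch. (If the loop never runs then $E'$ is empty and the final work is zero, trivially covered.) Since $l(e) = \lfloor \lg |S(e)| \rfloor$ yields $|S(e)| < 2^{l(e)+1} = O(\gap^{l(e)})$, charging $O(r|S(e)|) = O(r\,\gap^{l(e)})$ to each such epoch covers the entire $O(r\cdot\ccode{sampledEdges})$ bound, with the termination condition guaranteeing at least $2|E'|$ units of sample mass available to absorb the reinsertion of $E'$.

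The step requiring the most care is the expectation bookkeeping. I would condition on the entire run up to (but not including) the final \ccode{insertEdges}; this fixes $E'$, \ccode{sampledEdges}, and the set of round epochs, and makes $|E'| \le \ccode{sampledEdges}/2$ deterministic. The only remaining randomness is the internal greedy match of the final \ccode{insertEdges}, over which Lemma~\ref{lemma:insert} supplies the $O(r|E'|)$ expected bound; distributing the deterministic ceiling $O(r\cdot\ccode{sampledEdges})$ as $O(r\,\gap^{l(e)})$ per round epoch and taking total expectation then gives the claim. The main conceptual point, exactly as in the heavy-work argument, is that the doubling termination condition forces the leftover edges to be at most half the sample mass already processed, so their reinsertion is always payable by epochs created in the earlier rounds rather than requiring a fresh accounting.
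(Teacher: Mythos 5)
Your proof is correct and follows essentially the same route as the paper's: bound the trailing \ccode{insertEdges}$(E')$ by $O(r|E'|)$ via Lemma~\ref{lemma:insert}, use the loop guard to get $2|E'| \le \ccode{sampledEdges}$, identify \ccode{sampledEdges} with the total sample size of the epochs created during the random-settle rounds, and convert $|S(e)|$ to $O(\gap^{l(e)})$ per epoch. Your explicit unpacking of why $\ccode{sampledEdges} = \sum_e |S(e)|$ (the greedy match partitions each round's input into sample spaces) and the conditioning argument for the expectation are details the paper leaves implicit, but they add rigor rather than change the approach.
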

\begin{proof}
  The total number of edges in the samples up to the final step is
  \ccode{sampledEdges}.  The \ccode{insertEdges} at the end on input $E$ requires
  $O(r|E|)$ expected work by Lemma~\ref{lemma:insert}. Since by the termination condition
  we have that $2|E| < \ccode{SampledEdges}$, the final insert cost
  can be covered by the prior sampled edges during the rounds
  of random settle.
\end{proof}

Putting the work costs together we get the following work bound.
\begin{lemma}
  All work in \ccode{deleteEdges} to delete matched edges can be
  covered by charging $O(r^3 \gap^{l(e)})$ in expectation to each start or end of an
  epoch during the batch.
\end{lemma}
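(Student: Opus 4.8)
The plan is to treat this as the assembly lemma that fuses the three preceding component bounds, so almost all of the work has already been done; the task is to verify that the components partition the relevant work and that the per-component charges land on the two allowed charging targets (starts and ends of epochs) without interference. First I would recall the definitions set up just before Lemma~\ref{lemma:insert}: all direct work incurred by \ccode{deleteEdges} in deleting matched edges was split \emph{by definition} into light work, heavy work, and final work, where the clause ``all other work is heavy work'' makes this a genuine partition. This means every unit of matched-edge work lies in exactly one component, so summing the three component bounds neither double-counts nor omits anything. (The tiny $O(r)$-per-edge steps such as the \ccode{delete}$(S(m),m)$ calls and the conversion of sample edges to cross edges are already folded into the light work by that definition, so I need not account for them separately.)

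Next I would invoke the three component lemmas verbatim and note on which target each one charges. Lemma~\ref{lemma:light} covers the light work by charging $O(r^3\gap^{l(e)})$ in expectation to the \emph{deletion} (end) of each epoch. Lemma~\ref{lemma:heavy} covers the heavy work across all rounds of the batch by charging $O(r^3\gap^{l(e)})$ in expectation to the \emph{creation} (start) of each newly matched epoch. Lemma~\ref{lemma:final} covers the final work by charging $O(r\gap^{l(e)})$ in expectation to the start of each newly matched epoch. The key structural observation is that light work is charged exclusively to ends while heavy and final work are charged exclusively to starts, so the two families of charging targets are disjoint and the charges do not collide.

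I would then combine the charges by linearity of expectation. Each end of an epoch $e$ receives only the light-work charge, namely $O(r^3\gap^{l(e)})$. Each start of an epoch $e$ receives the heavy-work charge plus the final-work charge, which is $O(r^3\gap^{l(e)}) + O(r\gap^{l(e)}) = O(r^3\gap^{l(e)})$, the $O(r\gap^{l(e)})$ final term being absorbed into the dominant heavy term. Hence every start or end of an epoch is charged $O(r^3\gap^{l(e)})$ in expectation, and since the three components exhaust all matched-edge work in \ccode{deleteEdges}, this accounts for all of that work, giving the claim.

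The only step requiring genuine care — and the one I would state explicitly rather than gloss — is the exhaustiveness-and-disjointness of the partition, since the whole combination is only legitimate if no work is counted twice and none is left uncharged; this is exactly what the ``all other work is heavy work'' definition guarantees. Everything after that is routine: matching each component bound to its charging target and absorbing the smaller $O(r\gap^{l(e)})$ term into $O(r^3\gap^{l(e)})$. I expect no real obstacle here, as the difficulty was front-loaded into the three amortized-charging arguments of Lemmas~\ref{lemma:light}, \ref{lemma:heavy}, and~\ref{lemma:final}.
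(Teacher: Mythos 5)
Your proposal is correct and follows exactly the paper's route: the paper's own proof is the one-line observation that Lemmas~\ref{lemma:light}, \ref{lemma:heavy}, and~\ref{lemma:final} together cover all the work, relying on the same light/heavy/final partition (with ``all other work is heavy work'' ensuring exhaustiveness) that you spell out. Your version merely makes explicit the disjointness of the charging targets (ends for light work, starts for heavy and final) and the absorption of the $O(r\,\gap^{l(e)})$ final-work term, which the paper leaves implicit.
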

\begin{proof}
  Follows from Lemma~\ref{lemma:light}, \ref{lemma:heavy}, and
  \ref{lemma:final}, which combined cover all work.
\end{proof}

\subsection{Total Work: Charging to Natural Epochs}

We have so far been able to charge all work in the algorithm to epochs
at a rate of $O(r^3 2^{l(e)})$ per epoch $e$, whether natural or
induced.  We now need to charge the work associated with induced
deletions of epochs to natural deletions of epochs.  This will be
based on a counting argument where we will show that the total number
of edges that are in the sample of natural deletions is at least a
constant fraction as many as the total number in induced deletions.
This will allow us to cover the cost of the induced deletions.

There are two sources of induced deletions, both caused by a random
settle.  Firstly, when we randomly match an edge $e$, there can be
edges that are already matched and incident on $e$ (\ccode{stolen} in
the algorithm) and need to be deleted.  We refer to these as
\emph{stolen deletes}.  Then there are the deletes that are due to the
fact that after raising an edge it might be too heavy for its level
(\emph{bloated deletes}). In the analysis we cover the stolen deletes
in the current round as well as the bloated deletes from the previous
round with new matches in the current round.

We say that the \emph{deleted sample size} of a round is the total
sample size of stolen deletes from the current round and bloated deletes
from the previous round.  We say that the \emph{added sample size} of
a round is the total sample size of newly matched edges in the current
round.

\begin{lemma}
  On each round of \ccode{randomSettle} with deleted sample size
  $S_d$ and added sample size $S_a$, $S_a \geq 2 S_d$.
  \label{lemma:adddelete}
\end{lemma}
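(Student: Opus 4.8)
The plan is to start from the easy identity $S_a = |E|$, where $E$ is the set of edges fed into this call to \ccode{randomSettle}: since \ccode{parallelGreedyMatch} partitions its input into the sample spaces of the matches it returns, the added sample size is exactly $|E|$. I would then decompose $E$ structurally. The input $E$ is precisely the disjoint union $\bigcup_{m''} C(m'')$ of the cross-edge sets owned by the heavy matched edges $m''$ removed by the previous round's \ccode{deleteMatchedEdges} (disjoint because each cross edge has a unique owner). Each such $m''$ is heavy, so $|C(m'')| \ge 4r^2\gap^{l(m'')}$, while its own sample satisfies $|S(m'')| < \gap^{l(m'')+1}$; this factor-$\gap$ slack is where the level gap is spent. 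The bloated matches of the previous round are a subset of these $m''$, so the bloated contribution to $S_d$ is at most $\frac{1}{2r^2}\,|E| = \frac{1}{2r^2}S_a$.

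The heart of the argument, and the main obstacle, is the stolen contribution. A newly matched edge can be matched at a very low level yet be incident on, and therefore steal, an existing match of arbitrarily higher level, so no per-edge level comparison can work and I must charge in aggregate. The plan is to charge each stolen match $m'$ to the heavy match $m''$ that owned (in the previous round) the cross edge $m$ that became the new match stealing $m'$; this is well defined since each cross edge has a unique owner. Three facts then bound the total charged to a fixed $m''$: (i) every edge of $C(m'')$ is incident on $m''$, which has at most $r$ vertices, and the new matches are pairwise vertex-disjoint, so at most $r$ edges of $C(m'')$ become new matches; (ii) each such new match shares a vertex with $m''$, and that vertex was freed when $m''$ was removed, so it can steal through at most $r-1$ of its remaining vertices; and (iii) by Invariant~\ref{ref:leveled}.\ref{invariant4}, applied at the instant $m''$ owned $m$ with $m'$ incident, $l(m') \le l(m'')$, hence $|S(m')| < \gap^{l(m'')+1}$.

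Combining (i)--(iii), the stolen sample size charged to a fixed $m''$ is at most $r(r-1)\gap^{l(m'')+1} = 2r(r-1)\gap^{l(m'')} \le (\tfrac12-\tfrac1{2r})|C(m'')|$, using the heavy bound. Adding the bloated charge, which lands only on the bloated subset, keeps the total charge on any single heavy match's cross-edge set at most $(\tfrac12-\tfrac1{2r}+\tfrac1{2r^2})|C(m'')| \le \tfrac12|C(m'')|$ for all $r \ge 2$. Since the $C(m'')$ are disjoint and union to $E$, summing yields $S_d \le \tfrac12|E| = \tfrac12 S_a$, i.e. $S_a \ge 2S_d$.

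I expect the delicate part to be justifying (ii) and (iii) rather than the arithmetic: I would need to trace the order of operations across the round boundary to confirm that \ccode{stolen} is read before the new matches are installed, that \ccode{removeMatch} frees $m''$'s vertices before the next round begins (so the $r-1$ rather than $r$ bound is legitimate), and that \ccode{adjustCrossEdges} had restored Invariant~\ref{ref:leveled}.\ref{invariant4} at the moment $m''$ was deleted (so the bound may be invoked for $m$ and $m'$). The constant is tight: dropping the $r-1$ refinement or the factor-$\gap$ gap gives only $S_a \ge \tfrac85 S_d$, so these are exactly the places where the leveling constants are consumed.
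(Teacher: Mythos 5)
Your proof is correct and takes essentially the same route as the paper's: both start from $S_a = |E| = \sum_{e \in \mold} |C(e)|$ over the removed heavy matches (where the paper writes $\mold$ for your $m''$'s), charge each old match with at most $r$ incident new matches and hence at most $r(r-1)$ stolen deletes---each of level at most $l(e)$ by Invariant~\ref{ref:leveled}.\ref{invariant4}---plus one bloated delete, and close using the heavy/bloated bound $|C(e)| \geq 4r^2 2^{l(e)}$ together with $|S(m')| < 2^{l(m')+1}$. The only difference is bookkeeping: you carry the constants per owner (getting at most $\tfrac{1}{2}|C(m'')|$ charged to each heavy match), whereas the paper merges $r(r-1)+1 < r^2$ deletes per old match and sums over levels globally; the substance is identical.
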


\newcommand{\mold}{M_{\mbox{\footnotesize old}}}
\newcommand{\mnew}{M_{\mbox{\footnotesize new}}}
\newcommand{\es}{\;\;}

\begin{proof}
  The input edge set $E$ to \ccode{randomSettle} is the union of the
  owned cross edges of a set of heavy deleted matches
  $\mold$.  Each newly matched edge $m \in \mnew$ can therefore, for
  accounting purposes, be associated with the match in $\mold$ that
  previously owned it, and we will use $p(m)$ to indicate this deleted
  match.  Each matched edge $m \in \mnew$ can contribute up to $r-1$
  stolen deletes, one for each of its vertices except at least one
  incident on $p(m)$.  Now each $e \in \mold$ can have at most $r$
  newly matched edges incident on it (otherwise it would not be a
  matching).  Therefore we can associate $r(r-1)$ stolen deletes with
  each $e \in \mold$.  Furthermore we also associate the one bloated
  delete that might have caused each $e \in \mold$ to rise, and hence
  there are fewer than $r^2$ deletes associated with deleted edge $e$.
  This covers all the deletes.

  We make the following observations:
  \begin{enumerate}
  \item $|E| = \sum_{e \in \mnew} |S(e)| = \sum_{e \in \mold} |C(e)|$, since $|E|$ is the union of the cross edges in $\mold$ and all contribute to samples in $\mnew$,
  \item for $e \in \mold: |C(e)| \geq 4 r^2 2^{l(e)}$, since each is
    bloated or heavy, and
  \item $S_d < r^2 \sum_{e \in \mold} 2^{l(e) + 1}$, since each $e$ is
    associated with fewer than $r^2$ incident deletes and
    by Invariant~\ref{ref:leveled}.\ref{invariant4} the level $l$ of each
    such delete is at most $l(e)$.
    \end{enumerate}
Together this gives:
  \[
    S_a = \sum_{e \in \mnew} |S(e)|
        =  \sum_{e \in \mold} |C(e)|
        \geq  4 r^2 \sum_{e \in \mold} 2^{l(e)}
         >  2 S_d
\]
  \end{proof}

We note that having the gap between levels be a constant factor
($\alpha = 2$ in our case) instead of proportional to $r$ (as in
Assadi and Solomon~\cite{AS21}) is important to the proof.  In
particular, in the last step of the proof of Lemma \ref{lemma:adddelete} we lose a factor of $\alpha$. Since we have $\alpha=2$, this
brings the coefficient from $4$ to $2$ for us, which is fine. But if $\alpha$ was $\Theta(r)$, then the coefficient would
be reduced from $4$ to $\Theta(4/r)$, which is too small to be useful. We could have compensated for $\alpha=\Theta(r)$ by increasing the 
cutoff for being heavy (in \ccode{isHeavy}) by a factor of $r$, to $\Theta(r^3 2^{l(e)})$, but then the amortized cost of the overall algorithm would also have increased by
a factor of $r$, to be $O(r^4)$ per edge update. Thus, we save a factor of $r$ by using thinner levels.
In AS, reducing the level gap from $\Theta(r)$ to $2$ is not necessary because they assign every new match to some
specific deletes. This extra cost consideration in our algorithm comes from ``shuffling'' the sampled edges during the static matching.

\begin{lemma} \label{Lemma:5.7}
  If we start and end with an empty graph, all work done by deleting
  matched edges can be covered by charging each natural
  deletion $O(r^3 2^{l(e)})$ work in expectation.
\end{lemma}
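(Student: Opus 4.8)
The plan is to lift the per-epoch direct-work bound to one that only pays for natural deletions, using Lemma~\ref{lemma:adddelete} as the sole engine. First I would fix a single measure of ``size'' for an epoch $e$, namely $2^{l(e)}$ evaluated at the moment $e$ is created; since $2^{l(e)} \le |S(e)| < 2^{l(e)+1}$ and sample sets only shrink over an epoch's life, this is, up to a factor of two, the creation sample size, and the preceding direct-work lemma bounds the total direct work by $O(r^3 \sum_{e} 2^{l(e)})$ summed over all epochs (natural and induced, counting each epoch's start and end). Writing $C = \sum_{\text{all epochs}} 2^{l(e)}$, I would split it as $C = C_{\mathrm{nat}} + C_{\mathrm{ind}}$ according to whether an epoch is eventually deleted by the user or by a random settle. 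The hypothesis that we start and end with an empty graph is exactly what guarantees every created epoch is deleted once, so this split is exhaustive and the total direct work is $O(r^3 C)$.

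Next I would sum Lemma~\ref{lemma:adddelete} over every round of \ccode{randomSettle} across the whole execution. Measuring $S_d$ by the creation levels of the stolen/bloated matches is legitimate, since the proof of Lemma~\ref{lemma:adddelete} only uses the upper bound $|S(d)| < 2^{l(d)+1}$, which holds for the creation size; with this measure each induced deletion is charged to $S_d$ exactly once---stolen deletes to the current round and bloated deletes to the previous round---so $\sum_{\text{rounds}} S_d = C_{\mathrm{ind}}$. Likewise $\sum_{\text{rounds}} S_a$ equals the total creation size of all epochs born inside a random settle, which form a subset of all epochs and hence contribute at most $C$. Lemma~\ref{lemma:adddelete} then telescopes to
\[
  C \;\ge\; \sum_{\text{rounds}} S_a \;\ge\; 2\sum_{\text{rounds}} S_d \;=\; 2\,C_{\mathrm{ind}}.
\]

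Combining this with $C = C_{\mathrm{nat}} + C_{\mathrm{ind}}$ gives $2 C_{\mathrm{ind}} \le C_{\mathrm{nat}} + C_{\mathrm{ind}}$, i.e. $C_{\mathrm{ind}} \le C_{\mathrm{nat}}$ and therefore $C \le 2 C_{\mathrm{nat}}$. Plugging back, the total direct work is $O(r^3 C) = O(r^3 C_{\mathrm{nat}}) = O\!\left(r^3 \sum_{\text{natural } e} 2^{l(e)}\right)$, so assigning $O(r^3 2^{l(e)})$ to each natural deletion pays for everything; the expectation is inherited from the per-epoch direct-work bound, whereas the inequality $C \le 2 C_{\mathrm{nat}}$ is deterministic. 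I expect the main obstacle to be the bookkeeping that ties the round-local quantities of Lemma~\ref{lemma:adddelete} to the global totals: one must verify that the stolen-plus-bloated accounting covers each induced deletion exactly once with no double counting, that $\sum S_a \le C$ rather than being inflated by epochs re-created across rounds, and that measuring $S_d$ by creation size (not size at deletion) is simultaneously consistent with Lemma~\ref{lemma:adddelete} and with the direct-work charge $O(r^3 2^{l(e)})$. The slack of exactly $2$ in Lemma~\ref{lemma:adddelete} is precisely what makes the final step close, which is the reason the level gap is $2$ rather than $\Theta(r)$.
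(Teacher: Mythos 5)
Your overall route is the paper's route---telescope Lemma~\ref{lemma:adddelete} over all rounds, use the empty-to-empty conservation $C = C_{\mathrm{nat}} + C_{\mathrm{ind}}$ (the paper's $S_a = S_n + S_i$), and convert between levels and sample sizes at a cost of a factor of $2$---but there is a genuine gap at exactly the point you flagged and then waved through: the identity $\sum_{\text{rounds}} S_d = C_{\mathrm{ind}}$ is false. By the definition of ``deleted sample size,'' the bloated matches created in round $i$ are charged to round $i+1$. The while loop in \ccode{deleteEdges}, however, does not only exit when a round is empty: it exits as soon as $2|E'| \le \ccode{sampledEdges}$, at which point the final $E'$ (the cross edges owned by the last round's heavy bloated and stolen matches) is handed to \ccode{insertEdges} and there is no round $k+1$. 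The bloated deletes of that last round of \ccode{randomSettle} therefore appear in no round's $S_d$, and your telescoping only yields $C \ge 2\,(C_{\mathrm{ind}} - B)$, where $B$ is the total sample size of the final-round bloated matches. Without a separate bound on $B$ the argument does not close, and indeed your clean conclusion $C_{\mathrm{ind}} \le C_{\mathrm{nat}}$ (constant exactly $2$) is an artifact of the omission: the paper only proves the weaker $S_n > \frac{1}{3} S_i$.

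The paper closes this hole with a dedicated argument you would need to reproduce. Each final bloated match $e$ has sample size less than $2^{l(e)+1}$ yet owns at least $4r^2 2^{l(e)}$ cross edges, all of which land in the final $E'$; hence $B \le |E'|/(2r^2) \le |E'|/2$, and the termination condition $2|E'| \le \ccode{sampledEdges} \le S_a$ then gives $B \le \frac{1}{4} S_a$. Adding this to the $\frac{1}{2} S_a$ from the telescoped rounds yields $S_i \le \frac{3}{4} S_a$, i.e.\ $S_a \ge \frac{4}{3} S_i$, which with $S_a = S_i + S_n$ still suffices. A smaller wrinkle: Lemma~\ref{lemma:adddelete} is stated in sample-size units, while your $C$ counts $2^{l(e)}$; since $2^{l(e)} \le |S(e)| < 2^{l(e)+1}$, rescaling $S_a$ (not just $S_d$) to level units costs a factor of $2$, which the proof of the lemma happens to absorb because it actually establishes $S_a \ge 4r^2 \sum 2^{l(e)}$---but that needs to be checked and said, which is why the paper works in sample-size units throughout and performs the level-to-sample conversion once, at the end of its proof.
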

\begin{proof}
  Let $S_a$ be the total sample size of all new epochs, $S_i$ the
  total sample size of all induced deletions and $S_n$ of natural
  deletes.  From Lemma~\ref{lemma:adddelete} we have that in each
  round of random settle the number of added samples is at least twice
  the number of samples from induced deletions.  At the end of a batch
  update, the algorithm runs a \ccode{insertEdges} on the remaining
  unmatched edges.  These are therefore not covered by the sample set
  of a random settle.  Due to the termination condition, however, we
  know that the sample size of previously matched edges is twice as
  large as the number of the remaining edges. Let $|\mbox{bloated}|$ be the sample 
  size of bloated remaining edges. Because these edges
   are bloated, they accrued a factor of at least $2r^2$ cross edges after rising. 
   Thus, the sample size of deleted matches from the last round of random 
   settle is at most $\frac{|\mbox{bloated}|}{2r^2} \le \frac{|\mbox{bloated}|}{2}$.  
   Therefore, the total sample space of induced
  deletions is at most the $\frac{1}{4} S_a$ from the final round plus
  the $\frac{1}{2} S_a$ from all other rounds, giving $S_a >
  \frac{4}{3} S_i$ within each batch deletion and hence also across
  all batch deletions.  Since we start and end empty, we have that
  $S_a = S_i + S_n$.  Putting these together we get $S_n > \frac{1}{3} S_i$.

  Finally we note that the direct work we assign to each epoch is
  based on its level not its sample size.  In particular, it is $c r^3
  2^{l(e)}$ for some constant $c$.  Hence the cost per
  sample $C_s$ satisfies $\frac{1}{2} c r^3 < C_s \leq c r^3$, since
  $2^{l(e)} \leq |S(e)| < 2^{l(e) + 1}$.  This implies the per sample
  cost we assign to induced and natural epochs differs by at most a
  factor of two, which further implies that the cost assigned to
  natural deletions is at least a constant fraction of the cost
  assigned to induced deletions.  The natural deletions can therefore
  cover the cost of induced deletions with a constant factor overhead.
\end{proof}

\subsection{Overall Cost Bounds \label{sec:overall-cost} }

We now consider the overall work and \depth{} bounds. We want to bound the total amount of natural sample space destroyed, so that we can bound the overall work with Lemma \ref{Lemma:5.7}. We use $\Phi$, as introduced in Section \ref{Section:3.1}, as a measure of sample space destroyed. We demonstrate that $\Phi$ is well-defined in the wider batch-dynamic setting. At a high level, $\Phi$ still functions properly in the wider algorithm because sampled edges are isolated while their matched edge lives. Then, we use $\Phi$ to bound the natural sample space deleted across the entire run of the algorithm. 


  Firstly we note that if we have a bound for the work from a start
  where the graph is empty to an end when it is empty again, this implies
  a bound that is at least as good for any prefix at least half as dense (in total batch size) as the whole sequence.   All runs are
  a prefix of at least half of an empty-to-empty sequence since we can just delete the edges and
  the number of final deletes is at most the number of prior operations.
  Hence we need only consider empty-to-empty runs to asymptotically
  bound the cost for the prefix.

Let $Q$ be the set of natural epochs. Let $d_1,d_2,\ldots,d_T$ be the set of deletions by the user across all batches. 
Suppose that we dynamically assign prices to edges and payments $\Phi$ to edge deletions, as described in Section \ref{Section:3.1}, as follows. Whenever we call \ccode{parallelGreedyMatch}, we assign prices to the edges involved. When an edge $e$ is deleted, we pay $\Phi(e)$. Then, if $e$ was sampled and unmatched, we decrement 1 from the price of $e$'s associated matched edge. For periods of time when an edge $e'$ has no price, we set $\Phi(e')=0$.



If a matched edge gets deleted in the same time step as an unmatched edge in its sample space, we will consider both deletes to be early. Accordingly, if a matched edge and $z$ of its sampled edges are deleted in batch, we will subtract $z$ from the matched edge's price before paying its price. 

\begin{lemma} Assigning prices, as in Section \ref{Section:3.1}, is well-behaved:
\begin{enumerate}[leftmargin=*]
\item For all times $t$, $\Phi$ is a function. 
\item An edge is sampled iff $\Phi(e) > 0$. An edge deletion is early iff the edge is sampled.
\item (Extension of Lemma \ref{Lemma:3.4}): Suppose that the user deletes edge $e$ at time $t$. Then $\E[\Phi(e)] \le 2$. 
\end{enumerate} \label{Lemma:price-scheme}  \end{lemma}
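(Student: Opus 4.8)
My plan is to verify the three parts in sequence, since each builds on the previous one. The core difficulty is that the price-and-payment scheme from Section~\ref{Section:3.1} was defined for a single static run of \ccode{parallelGreedyMatch} on a fixed sample space, whereas here prices are assigned and decremented dynamically across many interleaved calls, and the same edge can pass in and out of being sampled as matches are created and destroyed. So the first task is really a bookkeeping argument that nothing clashes.

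For part~(1), I would argue that at any fixed time $t$ each edge $e$ has a well-defined price. The key structural fact is that while a matched edge $m$ lives, all of the edges in its sample space $S(m)$ are \emph{isolated}: Invariant~\ref{ref:leveled}.\ref{invariant2} guarantees that every edge is owned by exactly one incident matched edge, and \ccode{addMatch} sets $p(e)\assign m$ and $\mbox{type}(e)\assign\mbox{sampled}$ for every $e\in S_e$ exactly once per match. Hence no edge is simultaneously in two sample spaces, so the decrementing rule (``decrement the price of $e$'s associated matched edge'') refers to a unique matched edge, and the price of a matched edge is a single well-defined integer (its initial $|S_e|$ minus the number of its sampled edges deleted so far). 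When a matched edge is destroyed (naturally or by a settle), its sampled edges become cross edges via \ccode{addCrossEdge} and lose their price; when they are later re-sampled by a new call to \ccode{parallelGreedyMatch} they receive a fresh price. Thus at every time $t$ the assignment $\Phi$ is single-valued.

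For part~(2), I would show the two ``iff''s follow directly from the scheme's definition. An edge has positive price precisely when it currently sits in some live sample space, which by construction is exactly when $\mbox{type}(e)\in\{\mbox{matched},\mbox{sampled}\}$ with a matched owner still present---and we set $\Phi(e')=0$ for edges with no price---so $\Phi(e)>0$ iff $e$ is sampled. For the second claim I would unpack the definition of \emph{early} from Section~\ref{Section:3.1}: a deletion of $e$ is early iff $p(e)$ has not yet been deleted, i.e.\ iff $e$ still lies in a live sample space at deletion time, which is exactly the sampled condition; the convention that a matched edge deleted together with sampled edges in the same batch counts as early, and that $z$ co-deleted samples are subtracted first, makes this precise for batch deletions.

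For part~(3), the extension of Lemma~\ref{Lemma:3.4}, I would reduce to the static statement by conditioning on the history. The payment $\Phi(e)$ is nonzero only on early (sampled) deletions, and for such a deletion $e$ belongs to the sample space of its matched edge $p(e)$, which was produced by one specific invocation of \ccode{parallelGreedyMatch}. Since that static matching returns the identical output to \ccode{sequentialGreedyMatch} on the same random permutation, and since the adversary is oblivious (she fixes the deletion sequence $d_1,\dots,d_T$ without seeing the algorithm's randomness), I can condition on everything external to that one call---the graph on which it was run and all prior random choices---and apply Lemma~\ref{Lemma:3.4} verbatim to that call to get $\E[\Phi(e)]\le 2$. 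The expectation over the full randomness is then a convex combination of these conditional bounds, each at most $2$, so $\E[\Phi(e)]\le 2$.

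The main obstacle I anticipate is part~(3): the subtlety is justifying that conditioning on the prior history leaves the relevant randomness---the permutation used inside the single \ccode{parallelGreedyMatch} call that produced $e$'s sample space---uniform and independent of that history, so that Lemma~\ref{Lemma:3.4}'s equivalence-class argument still applies. This requires the observations that distinct random-settle calls use fresh independent permutations and that the oblivious adversary's choice of $d_t$ cannot depend on the internal permutation; both parts~(1) and~(2) are essentially definitional bookkeeping and should be routine once the isolation-of-sample-spaces fact is stated.
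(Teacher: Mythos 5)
Your proposal is correct and follows essentially the same route as the paper's proof: part~(1) via the isolation of sample spaces (a sampled edge belongs to exactly one live epoch and is untouched until its match dies), part~(2) by unpacking the definitions of \emph{early} and of the payment scheme, and part~(3) by reducing to Lemma~\ref{Lemma:3.4}. The only difference is one of explicitness---the paper's part~(3) is a one-line decomposition $\E[\Phi(e)] \le \E[\Phi(e) \mid \mbox{early}] \le 2$ citing Lemma~\ref{Lemma:3.4}, whereas you spell out the conditioning on history, the freshness and independence of each call's permutation, and the obliviousness of the adversary that make that citation legitimate, which is a faithful elaboration of the paper's argument rather than a different approach.
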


\begin{proof} We proceed as follows.

\begin{enumerate}[leftmargin=*]

\item Note that for an edge $e$, if $e$ has a price at time $t$, then by construction $e$'s matched edge still exists in the graph, so $e$ is sampled at time $t$. Note that a sampled edge is owned by exactly one epoch within the instance of \ccode{parallelGreedyMatch} that sampled it. Because sampled edges do not participate in the algorithm until their matched edge is deleted, this edge will not join the sample space of any other epochs.  Therefore, $e$ has at most one price at time $t$. Therefore, $\Phi(e)$ has at most one value. Because we set $\Phi(e)$ to 0 if an edge $e$ has no price (is a cross edge), we have that $\Phi(e)$ will map to exactly one value for all edges $e$. Therefore $\Phi$ is well-defined in the dynamic setting.

\item Consider the early delete of an edge $e$. Recall that $e$'s matched edge is $p(e)$, and that $e$ is in the sample space $S_{p(e)}$. By definition of early, $p(e)$ is still matched. Therefore, $p(e)$'s sample space is still sequestered from the rest of the algorithm, and so $e$ is still in $p(e)$'s sample space. Therefore, $e$ is sampled. Similarly, let $e'$ be a sampled edge. Because $e'$ is sampled, its matched edge $p(e')$ is still alive, so the deletion of $e$ is early. 

Note that the scheme in Section \ref{Section:3.1} never assigns payment 0 to a deletion. We conclude that a deleted edge gives positive payment iff the edge is sampled and iff the deletion is early. 

\item Note that $\E[\Phi(e)]$ is equal to $Pr[\Phi(e) > 0] \E[\Phi(e) | \Phi(e > 0)] + Pr[\Phi(e)=0] (0)$, which is no more than $\E[\Phi(e) | \textrm{early} ]$, which by Lemma \ref{Lemma:3.4} is no more than $2$. 

\end{enumerate} \end{proof}

\begin{lemma} The total natural sample space destroyed is no more than the total payment given by all user deletes: $\sum_{e \in Q} |S_e| \le \sum_{t=1}^T \Phi(d_t)$. \label{lemma:sampleprice}  \end{lemma}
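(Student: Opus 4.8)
The plan is to attribute each user delete's payment to a single epoch, and then show that the payments attributed just to the natural epochs already account for $\sum_{e \in Q} |S_e|$. By Lemma~\ref{Lemma:price-scheme}, whenever a user delete $d_t$ yields positive payment it is an early delete of a sampled edge, and at the moment of deletion that sampled edge lies in the sample space of exactly one live epoch, namely $p(d_t)$. I would therefore charge $\Phi(d_t)$ to the epoch $p(d_t)$ (deletes of cross edges pay $0$ and are charged to nothing). Since by Lemma~\ref{Lemma:price-scheme}(1) an edge carries at most one price at any instant, this charging is well defined and partitions the total payment $\sum_{t=1}^T \Phi(d_t)$ among the epochs.

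Next I would fix a single natural epoch $e \in Q$ and argue, exactly as in Lemma~\ref{Lemma:3.5}, that the payments charged to $e$ total precisely $|S_e|$. The key fact specific to natural epochs is that a match's sample space is sequestered from the rest of the algorithm for the match's entire lifetime: nothing internal touches $S_e$ until the match $e$ is removed, and for a natural epoch that removal is performed by the user. Hence every early delete on $S_e$ is a user delete that pays $1$ and decrements $e$'s price, and the deletion of $e$ itself is the user delete that pays $e$'s remaining price. If $j$ unmatched sampled edges of $e$ are deleted early before (or in the same batch as) $e$, then $e$'s price at deletion is $|S_e| - j$, so the total charged to $e$ is $j \cdot 1 + (|S_e| - j) = |S_e|$.

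Finally I would sum over all natural epochs. Because the charging is disjoint across epochs, all payments are non-negative, and $Q$ is a subset of all epochs,
\[
  \sum_{e \in Q} |S_e| = \sum_{e \in Q} \sum_{t \,:\, p(d_t) = e} \Phi(d_t) \le \sum_{t=1}^{T} \Phi(d_t),
\]
which is the claimed bound.

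I expect the main obstacle to be the second step: verifying that for a natural epoch the full price $|S_e|$ is genuinely realized inside $\sum_t \Phi(d_t)$. The reason this succeeds for natural but not induced epochs is precisely that an induced epoch's match is removed internally, converting its remaining sample edges to cross edges without any user paying its remaining price, so its charged payments can fall short of $|S_e|$; restricting the sum to $Q$ and asking only for a lower bound sidesteps this asymmetry. One must also rule out double counting when a sampled edge is deleted, its epoch ends, and the edge is later re-sampled into a different epoch, which is exactly what the single-price guarantee of Lemma~\ref{Lemma:price-scheme}(1) provides.
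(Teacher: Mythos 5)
Your proposal is correct and takes essentially the same approach as the paper's proof: both partition the total payment among epochs (using the single-price guarantee of Lemma~\ref{Lemma:price-scheme} to make the charging well defined), show via the same accounting---$j$ early unit payments plus a remaining price of $|S_e|-j$, with the same-batch convention handled identically---that each natural epoch accumulates exactly $|S_e|$, and then drop the nonnegative payments attributed to induced epochs to obtain the inequality. Your closing observation about why induced epochs' payments fall short of $|S_e|$ mirrors the paper's remark that sampled deletes from induced epochs pay without destroying natural sample space.
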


\begin{proof}  For this proof, we look at the algorithm after it has finished running. Note that a natural epoch $e$ is destroyed by a user deletion and contributes $|S_e|$ to the total natural sample space destroyed, where $|S_e|$ is the original size of $e$'s sample space.

Consider the price associated with this same natural epoch $e$, deleted at time $t$. We proceed in a similar fashion to the proof of Lemma \ref{Lemma:3.5}. Suppose that $e$ started with $|S_e|$ sample space, has $y$ sample space at the beginning of time $t$, and has $z$ of its sampled edges also deleted at time $t$, in the same batch. We pay $y-z$ for deleting $e$, and the deletions of its sampled edges contribute $(|S_e|-y) + z$ payment, for $|S_e|$ payment total. Because edges have at most one payment value associated to them, this $|S_e|$ payment is only associated with the natural epoch $e$. 
Therefore, the total payment is equal to the sum of the payments associated with each epoch.

Since for each natural epoch, the sample space lost is no more than the payment, the lemma follows. Note that the total payment can be greater than the total natural sample space lost because unmatched sampled deletes from induced epochs pay but do not destroy natural sample space. \end{proof}
 
\begin{theorem}
  For any sequence of batch updates starting and ending with an
  empty graph, and with $N$ edges added or deleted across all batches,
  the total work is $O(r^3 N)$ in expectation.
\end{theorem}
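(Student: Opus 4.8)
The plan is to assemble the machinery of this section into one charging argument, so I would begin by splitting the total work into the work for inserting edges and deleting \emph{unmatched} edges, versus the work for deleting \emph{matched} edges. The first part is immediate: as noted at the start of the analysis, each insertion or unmatched deletion touches only $O(r)$ data-structure entries, so across the whole sequence it contributes $O(rN)$ work, comfortably within the claimed bound. All the difficulty is concentrated in the matched-edge deletions, and for these I would lean entirely on the earlier lemmas rather than re-examining the algorithm. Note that the empty-to-empty hypothesis of the theorem is exactly what is needed to invoke the relation $S_a = S_i + S_n$ underlying Lemma~\ref{Lemma:5.7}, so the hypotheses line up.

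For the matched-edge work I would invoke Lemma~\ref{Lemma:5.7}: since the sequence starts and ends empty, all such work can be charged to natural deletions at an expected rate of $O(r^3 2^{l(e)})$ per natural epoch $e$. Summing over the set $Q$ of natural epochs and using $2^{l(e)} \le |S_e|$ (immediate from $l(e) = \lfloor \lg |S(e)| \rfloor$), the expected matched-deletion work is bounded by $O(r^3)\cdot \E\!\left[\sum_{e\in Q}|S_e|\right]$. The entire problem thus reduces to bounding the expected total natural sample space $\E\!\left[\sum_{e\in Q}|S_e|\right]$, which is precisely what the potential-function analysis was built to control.

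To finish I would chain Lemmas~\ref{lemma:sampleprice} and~\ref{Lemma:price-scheme}. Lemma~\ref{lemma:sampleprice} supplies the \emph{pathwise} inequality $\sum_{e\in Q}|S_e| \le \sum_{t=1}^T \Phi(d_t)$, holding for every realization of the algorithm's randomness; taking expectations and applying linearity gives $\E\!\left[\sum_{e\in Q}|S_e|\right] \le \sum_{t=1}^T \E[\Phi(d_t)]$. Part~3 of Lemma~\ref{Lemma:price-scheme} bounds each term by $\E[\Phi(d_t)] \le 2$, and since the number of user deletions $T$ is at most $N$ (indeed $N/2$ in an empty-to-empty run), we obtain $\E\!\left[\sum_{e\in Q}|S_e|\right] \le 2N$. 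Multiplying back the $O(r^3)$ factor and adding the $O(rN)$ from the insertion and unmatched-deletion work yields the claimed $O(r^3 N)$ expected total.

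The step I expect to demand the most care is the passage through expectations, since the sample sizes $|S_e|$, the (random) set $Q$ of natural epochs, and the payments $\Phi(d_t)$ are all random and intricately correlated across time. The argument goes through only because the two supporting lemmas decouple these dependencies in exactly the right way: Lemma~\ref{lemma:sampleprice} dominates sample space by payments \emph{deterministically} on every realization, so no joint distribution of epochs must be understood, while Lemma~\ref{Lemma:price-scheme} bounds $\E[\Phi(d_t)]$ for \emph{every} delete regardless of prior history—which is precisely what the oblivious-adversary model licenses. With these in hand, plain linearity of expectation closes the argument, and the per-epoch ``in expectation'' charge of Lemma~\ref{Lemma:5.7} combines with the expectation over sample sizes via the tower property without further subtlety.
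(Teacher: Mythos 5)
Your proposal is correct and follows essentially the same route as the paper's proof: chain Lemma~\ref{lemma:sampleprice} (pathwise) with Lemma~\ref{Lemma:price-scheme} to get $\E\left[\sum_{e \in Q} |S_e|\right] \le 2T \le 2N$, then apply Lemma~\ref{Lemma:5.7} together with $2^{l(e)} \le |S_e|$ to bound the matched-deletion work by $O(r^3 N)$. Your explicit treatment of the $O(rN)$ insertion/unmatched-deletion work and of the interchange of expectations only makes precise steps the paper leaves implicit.
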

\begin{proof}

In this proof, we view the algorithm before the run begins. By Lemma \ref{lemma:sampleprice}, observe that $\E[\sum_{e \in Q} |S_e|] \le \E[\sum_{t=1}^T \Phi(d_t)]$. By Lemma \ref{Lemma:price-scheme}, observe that $\sum_{t=1}^T \E[\Phi(d_t)] \le \sum_{t=1}^T 2 = 2T$. By Lemma \ref{Lemma:5.7}, the total work of the algorithm is bounded by \begin{equation*} \sum_{e \in Q} C r^3 2^{l(e)} \le \sum_{e \in Q} 2C r^3 |S_e| \le 2Cr^3 (2T) \le O(r^3 N), \end{equation*} where $C$ is the constant hidden by Lemma \ref{Lemma:5.7}. \end{proof}

\begin{lemma}
The \depth{} for every batch update is bounded by $O(\log^3 m)$, where $m$ is the maximum number of edges at any point in the graph.
\end{lemma}
\begin{proof}
  The \depth{} for \ccode{randomSettle}, \ccode{insertEdges}, and
  \ccode{deleteMatchedEdges} is $O(\log^2 m)$.  Specifically, the
  only steps in these functions that has \depth{} more than $O(\log m)$
  is the random greedy match, which has \depth{} $O(\log^2 m)$ by
  Lemma~\ref{lemma:greedymatchcost}.  The \depth{} of \ccode{deleteEdges}
  is $O(\log^2 m)$ up to but not including the while loop (dominated
  by the \ccode{deleteMatchedEdges}).  The while loop will iterate at
  most $O(\log m)$ steps since $E'$ cannot grow larger than $E$, and
  in each iteration $|E'|$ is growing by a constant factor.  Since the
  while loop makes a call to \ccode{randomSettle}, which has \depth{}
  $O(\log^2 m)$, the total \depth{} is $O(\log^3 m)$. \end{proof}

\section{Conclusion}

In this work, we demonstrate that a maximal matching can be maintained in the parallel batch-dynamic setting against an oblivious adversary in $O(r^3)$ expected amortized work per update. We do so by using a black box random greedy maximal matching algorithm to handle unsettled edges at all levels at once, and arguing that we can charge the work in aggregate to natural epochs, which have small total weight.

An open question is whether the work can be improved to $O(r^2)$. One key bottleneck in our work is the handling of stolen deletes. Since deleting a matched edge can cause $r^2$ stolen deletes, edges must have an $r^2$ factor before we permit them to random settle. Since edge updates naturally take $O(r)$ time for hyperedges, requiring an $r^2$ factor before random settling is cost prohibitive in the $O(r^2)$ setting. It is open if it would be possible in a parallel algorithm to reduce the number of stolen deletes, or if an entirely different set of techniques would be needed.

\myparagraph{Acknowledgements} This work was supported by National Science Foundation grants CCF-2119352 and CCF-1919223.

\bibliographystyle{ACM-Reference-Format}
\balance
\bibliography{bibliography/strings,bibliography/main}

\end{document}